\documentclass[12pt]{elsarticle}

\usepackage{amsfonts,amsmath,amssymb,mathrsfs,amsthm}
\usepackage{epsfig,epstopdf,graphicx,graphics}
\usepackage{color}
\usepackage{braket}
\usepackage{float}
\usepackage{ifpdf}
\usepackage[colorlinks=true]{hyperref}

\newtheorem{theorem}{Theorem}[section]

\newtheorem{corollary}{Corollary}[section]

\newtheorem{definition}{Definition}[section]
\newtheorem{example}{Example}[section]

\newtheorem{lemma}{Lemma}[section]

\newtheorem{problem}{Problem}[section]
\newtheorem{proposition}{Proposition}[section]
\newtheorem{remark}{Remark}[section]

\newcommand{\bthm}{\begin{theorem}}
\newcommand{\ethm}{\end{theorem}}
\newcommand{\blem}{\begin{lemma}}
\newcommand{\elem}{\end{lemma}}
\newcommand{\bex}{\begin{example}}
\newcommand{\eex}{\end{example}}
\newcommand{\bprop}{\begin{proposition}}
\newcommand{\eprop}{\end{proposition}}
\newcommand{\bplm}{\begin{problem}}
\newcommand{\eplm}{\end{problem}}
\newcommand{\bmrk}{\begin{remark}}
\newcommand{\emrk}{\end{remark}}
\newcommand{\bdfn}{\begin{definition}}
\newcommand{\edfn}{\end{definition}}
\newcommand{\bcor}{\begin{corollary}}
\newcommand{\ecor}{\end{corollary}}

\newcommand{\beq}{\begin{equation}}
\newcommand{\eeq}{\end{equation}}
\newcommand{\beqm}{\begin{equation*}}
\newcommand{\eeqm}{\end{equation*}}
\newcommand{\beqn}{\begin{eqnarray}}
\newcommand{\eeqn}{\end{eqnarray}}
\newcommand{\beqnm}{\begin{eqnarray*}}
\newcommand{\eeqnm}{\end{eqnarray*}}
\newcommand{\bea}{\begin{align}}
\newcommand{\eea}{\end{align}}
\newcommand{\bead}{\begin{aligned}}
\newcommand{\eead}{\end{aligned}}
\newcommand{\beam}{\begin{align*}}
\newcommand{\eeam}{\end{align*}}
\newcommand{\bs}{\begin{subequations}}
\newcommand{\es}{\end{subequations}}

\newcommand{\bei}{\begin{itemize}}
\newcommand{\eei}{\end{itemize}}
\newcommand{\bed}{\begin{description}}
\newcommand{\eed}{\end{description}}
\newcommand{\bee}{\begin{enumerate}}
\newcommand{\eee}{\end{enumerate}}

\newcommand{\bey}{\begin{array}}
\newcommand{\eey}{\end{array}}

\newcommand{\beb}{\begin{thebibliography}{99}}
\newcommand{\eeb}{\end{thebibliography}}

\newcommand{\mbf}{\mathbf}

\newcommand{\la}{\label}

\def\mm[#1]{{\rm #1}}
\def\f[#1]{\frac1{\sqrt{2^{#1}}}}

\usepackage{scalerel,stackengine}
\stackMath
\newcommand\reallywidehat[1]{%
\savestack{\tmpbox}{\stretchto{%
  \scaleto{%
    \scalerel*[\widthof{\ensuremath{#1}}]{\kern-.6pt\bigwedge\kern-.6pt}%
    {\rule[-\textheight/2]{1ex}{\textheight}}
  }{\textheight}%
}{0.5ex}}%
\stackon[1pt]{#1}{\tmpbox}%
}

\begin{document}

\author[mymainaddress]{Zhiyuan Dong}
\ead{dongzhiyuan@hit.edu.cn}
\address[mymainaddress]{School of Science, Harbin Institute of Technology, Shenzhen, China}

\author[mysecondaryaddress]{Weichao Liang}
\ead{weichao.liang@xjtu.edu.cn}
\address[mysecondaryaddress]{School of Automation Science and Engineering, Faculty of Electronic and Information Engineering, Xi'an Jiaotong University, China}

\author[mythirdaddress,myfourthaddress]{Guofeng Zhang}
\ead{guofeng.zhang@polyu.edu.hk}
\address[mythirdaddress]{Department of Applied Mathematics, The Hong Kong Polytechnic University, Hung Hom, Kowloon, Hong Kong SAR, China}
\address[myfourthaddress]{Shenzhen Research Institute, The Hong Kong Polytechnic University, Shenzhen, China}

\begin{abstract}
We establish a framework for realizing back-action-evading (BAE) measurements and quantum non-demolition (QND) variables in linear quantum systems. The key condition, a purely imaginary Hamiltonian with a real or imaginary coupling operator, enables BAE measurements of conjugate observables. Symmetric coupling further yields QND variables. For non-compliant systems, coherent feedback can engineer BAE measurements. Crucially, the QND interaction condition simultaneously ensures BAE measurements and promotes the coupling operator to a QND observable. This work provides a unified structural theory for enhancing precision in quantum metrology and sensing.    
\end{abstract}

\begin{keyword}
linear quantum systems \sep back-action-evading measurements \sep
quantum non-demolition variables \sep quantum coherent feedback control
\end{keyword}

\title{Back-Action-Evading Measurements and Quantum Non-Demolition Variables via Linear Systems Engineering}

\maketitle


\tableofcontents

\section{Introduction}

In a typical indirect measurement within a quantum system \cite{belavkin1989nondemolition,Belavkin1994,HMW95,BvHJ07}, a auxiliary quantum system, often termed a probe, is deliberately coupled to the system of interest. Information about the latter is inferred by performing a direct measurement on the probe after their interaction. In this framework, the probe serves as an input channel before the interaction and transforms into an output channel afterwards. A fundamental consequence of this information acquisition process is the introduction of measurement back-action, which inevitably disturbs the dynamics of the system of interest.

A key objective in quantum control and metrology is to identify and protect parts of a quantum system that are inherently immune to such measurement-induced disturbances. If a specific subsystem of a linear quantum system is inaccessible to the probe—meaning it is neither influenced by the input probe nor detectable in the output—it is termed a decoherence-free subsystem (DFS) \cite{NY13,PDP17,ZGPG18,ZPL20}. Importantly, a DFS is not an isolated system; it interacts with other parts of the composite system. Its decoherence-free property implies that its state evolves unitarily and remains separable from the rest of the system's degrees of freedom, provided the system-environment coupling respects a certain symmetry. In the Heisenberg picture, this concept is closely related to the decoherence-free subspaces in the Schrödinger picture \cite[Sec. III-C]{DZWW23}, which play a crucial role in protecting quantum information in quantum computation \cite{TV08}.

While a DFS represents a structural immunity where certain system degrees of freedom are statically inaccessible to the probe, a more dynamical form of back-action evasion can be engineered at the level of input-output relationships. Specifically, a quantum back-action evading (BAE) measurement is achieved when a particular observable of the output probe, denoted $\boldsymbol{y}_q$, is insensitive to a specific conjugate observable of the input probe, denoted $\boldsymbol{u}_p$. In the language of linear systems and the Kalman canonical form, this condition corresponds to a zero transfer function from the input $\boldsymbol{u}_p$ to the output $\boldsymbol{y}_q$. This frequency-domain characterization is consistent with the standard linear systems approach to quantum measurement theory, where transfer functions have been rigorously established as a valid tool for describing BAE conditions \cite[Fig. 2]{NY14}. In continuous-wave BAE experiments, the quantity of interest is typically the measured quadrature of the output signal \cite{Liu_2022}; a zero transfer function in the frequency domain guarantees that the back-action noise from input $\boldsymbol{u}_p$ is evaded across all relevant frequencies. Quantum BAE measurements are therefore of paramount importance in quantum sensing and metrology \cite{WC13,BQND24}, as they enable precision beyond the standard quantum limit by selectively rendering the measurement immune to specific fluctuations of the input field.

The concept of back-action evasion focuses on the input-output relation—evading back-action from a specific input channel. A related but distinct paradigm concerns the system observable itself: the notion of quantum non-demolition (QND) measurements \cite{BK96}. According to the Heisenberg uncertainty principle, the precision of simultaneous measurements of canonically conjugate observables (e.g., position and momentum, or amplitude and phase) is fundamentally limited. Consequently, an indirect measurement of an observable $\mathcal{O}$ generally introduces back-action onto its conjugate counterpart. However, if the Heisenberg-picture evolution of $\mathcal{O}$ is a function only of itself and is independent of its conjugate observables, then the measurement of $\mathcal{O}$ does not perturb its own future evolution. This defines a QND measurement: an observable $\mathcal{O}$ is a QND variable if it can be measured repeatedly, with the measurement back-action confined entirely to its conjugate variables, leaving the measured observable's future trajectory unaffected \cite{TC10,WC13,NY14,ZGPG18,LOW+21}. While back-action evasion ensures that a specific output is free from a specific input noise, QND measurement ensures that the observable of interest is free from the measurement back-action generated by its own measurement.

This paper presents a systematic framework for realizing BAE measurements and identifying QND variables using linear systems engineering. By leveraging the structural properties of linear quantum systems, we unify the concepts of BAE and QND within a common state-space representation, providing both a theoretical foundation and practical design criteria for back-action evasion in continuous-mode measurement scenarios.
Some commonly notation used in this paper are listed as follows.

\textit{Notation}.
\begin{itemize}
\item $\imath =\sqrt{-1}$ is the imaginary unit. $I_{k}$ is the identity matrix and $0_{k}$ the zero matrix in $\mathbf{C}^{k \times k}$. $\delta_{ij}$ denotes the Kronecker delta;
i.e.,~$I_k=[\delta_{ij}]$. $\delta(t)$ is the Dirac delta function. ${\rm Re}(X)$ denotes the real part of the matrix $X$; while ${\rm Im}(X)$ is its imaginary part.

\item $x^{\ast}$ denotes the complex conjugate of a complex number $x$ or
the adjoint of an operator $x$. Clearly. $(xy)^\ast = y^\ast x^\ast$.  Given two operators $\bf{x}$ and $\bf{y}$, their commutator is defined to be $[\bf{x},\bf{y}] \triangleq \bf{x}\bf{y}-\bf{y}\bf{x}$.

\item For a matrix $X=[x_{ij}]$ with  number or operator entries,  $X^{\top}=[x_{ji}]$ is the matrix transpose. Denote $X^{\#}=[x_{ij}^{\ast}]$, and $X^{\dagger}=(X^{\#})^{\top}$. For a vector $x$, we define $\breve{x}\triangleq \bigl[
\begin{smallmatrix}
x \\
x^{\#}
\end{smallmatrix}
\bigr]$.

\item Given two \textit{column} vectors of operators $\bf{X}$ and $\bf{Y}$ of the same length, their commutator is defined as
\beq
[\bf{X},\bf{Y}^\top] \triangleq ([\bf{X}_j,\bf{Y}_k] ) =\bf{X}\bf{Y}^\top- (\bf{Y}\bf{X}^\top)^\top.
\eeq
If $\bf{X}$ is a \textit{row} vector of operators of length $m$ and $\bf{Y}$ is a \textit{column} vector of operators of length $n$, their commutator is defined as
\begin{equation}\label{dec19-6}
[\mbf{X},\mbf{Y}] \triangleq \left(\begin{array}{@{}ccc@{}}                               [\mbf{x}_1,\mbf{y}_1] & \cdots & [\mbf{x}_m,\mbf{y}_1] \\
                               {\vdots} & \ddots & \vdots \\
                               {[\mbf{x}_1,\mbf{y}_n]} & \cdots & [\mbf{x}_m,\mbf{y}_n]
\end{array}
                           \right)_{n\times m}=(\mbf{X}^\top\mbf{Y}^\top)^\top-\mbf{Y}\mbf{X}.
\end{equation}


\item Let $J_{k} \triangleq \mathrm{diag}(I_k,-I_k)$. For a matrix $X\in
\mathbf{C}^{2k\times 2r}$, define its $\flat$-adjoint by $X^{\flat }
\triangleq J_{r}X^{\dagger}J_{k}$. The $\flat$-adjoint operation enjoys the following  properties:
\beq
(x_1 A + x_2 B)^{\flat}=x_1^{*}
A^{\flat} + x_2^{*} B^{\flat}, \ \ (AB)^{\flat}=B^{\flat} A^{\flat}, \ \
(A^{\flat})^{\flat}=A,
\eeq
where $x_1,x_2\in \mathbf{C}$.

\item Given two matrices $U$, $V\in \mathbf{C}^{k\times r}$, define their
 \emph{doubled-up} \cite{GJN10} as  $\Delta
(U,V) \triangleq
\bigl[
\begin{smallmatrix}
U & V \\
V^{\#} & U^{\#}
\end{smallmatrix}
\bigr]$.  The set
of doubled-up matrices is closed under addition, multiplication and $\flat$ adjoint operation.

\item A matrix $T \in \mathbf{C}^{2k\times 2k}$ is called \emph{Bogoliubov}
if it is doubled-up and satisfies $TT^{\flat}=T^{\flat}T=I_{2k}$. The set of Bogoliubov matrices
forms a complex non-compact Lie group known as the Bogoliubov group.

\item Let $\mathbb{J}_{k} \triangleq \bigl[
\begin{smallmatrix}
0_{k} & I_k \\
-I_k & 0_{k}
\end{smallmatrix}
\bigr]$. For a matrix $X\in \mathbf{C}^{2k\times 2r}$, define its $\sharp$-
\emph{adjoint} $X^{\sharp}$ by $X^{\sharp} \triangleq -\mathbb{J}_{r}X^{\dagger}
\mathbb{J}_{k}$. The $\sharp$-\emph{adjoint} satisfies properties similar to
the usual adjoint, namely
\beq
(x_1 A + x_2 B)^{\sharp}=x_1^{*} A^{\sharp} + x_2^{*}
B^{\sharp}, \ \ (AB)^{\sharp}=B^{\sharp} A^{\sharp},  \ \ (A^{\sharp})^{
\sharp}=A.
\eeq

\item A matrix $\mathbb{S} \in \mathbf{C}^{2k\times 2k}$ is called \emph{symplectic},
if $\mathbb{S}\mathbb{S}^{\sharp}=\mathbb{S}^{\sharp}\mathbb{S}=I_{2k}$. 

\end{itemize}

The remainder of this paper is organized as follows. Section \ref{preliminaries} reviews the essential preliminaries of linear quantum systems, including the quantum stochastic differential equation, Heisenberg-picture dynamics, and input–output relations. Section \ref{sec:qndbae} establishes the framework for realizing BAE measurements. It first provides sufficient conditions for bilateral BAE measurements realizations under purely imaginary inherent system Hamiltonian and real or imaginary coupling operators, then extends to unilateral BAE measurement realizations when the real parts of the system Hamiltonian are equal or opposite. A special case involving Michelson interferometry is presented, followed by a coherent feedback control scheme that engineers BAE measurements for non-compliant systems. Section \ref{QND INT} investigates QND interaction. It begins with the single-input–single-output (SISO) case and progressively treats multi-input–multi-output (MIMO) systems in the annihilation–creation, quadrature, and Kalman canonical forms. The QND interaction condition is shown to simultaneously guarantee BAE measurements and promote the coupling operator to a QND observable. Section \ref{Conclu} concludes the paper and outlines future directions.

\section{Preliminaries}\label{preliminaries}

The time evolution of a linear quantum system is governed by a unitary operator $\mbf{U}(t,t_0)$, which satisfies the following quantum stochastic differential equation (QSDE)

\begin{equation}\label{dU}
d\mbf{U}(t,t_0)=\left[\left(-\imath \mbf{H}-\frac{1}{2}\mbf{L}^\dagger \mbf{L}\right)dt-\mbf{L}^\dagger S d\mbf{B}_{\rm in}(t)+d\mbf{B}_{\rm in}^\dagger(t)\mbf{L}+{\rm Tr}\left[(S-I)d\mbf{\Lambda}^\top(t)\right]\right]\mbf{U}(t,t_0),    
\end{equation}
where $\mbf{U}(t_0,t_0)=I$. $S$ denotes the scattering matrix, $\mbf{L}$ is the coupling operator with $\mbf{L}=\left[\begin{array}{cc}
C_- & C_+ 
\end{array}\right]\breve{\mbf{a}}$, and $\mbf{H}$ is the system Hamiltonian $\mbf{H}=\frac{1}{2}\breve{\mbf{a}}^\dagger\Omega\breve{\mbf{a}}$, where $\Omega=\Delta(\Omega_-,\Omega_+)$. The integrated input annihilation, creation, and gauge processes are
\begin{equation}\begin{aligned}
\mbf{B}_{\rm in}(t)=\int_{t_0}^t \mbf{b}_{\rm in}(s)ds, ~~ \mbf{B}_{\rm in}^\ast(t)=\int_{t_0}^t \mbf{b}_{\rm in}^\ast(s)ds, ~~ \mbf{\Lambda}(t)=\int_{t_0}^t \mbf{b}_{\rm in}^\ast(s) \mbf{b}_{\rm in}(s)ds,
\end{aligned}\end{equation}
respectively, and satisfy the following commutation relation
\begin{equation*}
\left[\mbf{b}_j(t),\mbf{b}^\ast_k(s)\right]=\delta_{jk}\delta(t-s).
\end{equation*}
In the Heisenberg picture, denote a system operator $\mbf{X}(t)=\mbf{U}^\ast(t,t_0)(\mbf{X}\otimes I_{\rm field})\mbf{U}(t,t_0)$, and its dynamical evolution can be described as

\begin{equation}\label{dX}\begin{aligned}
d\mbf{X}(t)=&\mathcal{L}(\mbf{X}(t))dt+[\mbf{L}^\dagger(t),\mbf{X}(t)]Sd\mbf{B}_{\rm in}(t) \\
&+d\mbf{B}_{\rm in}^\dagger(t) S^\dagger[\mbf{X}(t),\mbf{L}(t)]+{\rm Tr}\left[(S^\dagger \mbf{X}(t)S-\mbf{X}(t))d\mbf{\Lambda}^\top(t)\right],    
\end{aligned}\end{equation}
where the superoperator 
\begin{equation}
\mathcal{L}(\mbf{X}(t))=-\imath[\mbf{X}(t),\mbf{H}(t)]+\frac{1}{2}\mbf{L}^\dagger(t)[\mbf{X}(t),\mbf{L}(t)]+\frac{1}{2}[\mbf{L}^\dagger(t),\mbf{X}(t)]\mbf{L}(t).    
\end{equation}
The input-output relation is given by
\begin{equation}\label{eq:io}
d\mbf{B}_{\rm out}(t)=\mbf{L}(t)dt+Sd\mbf{B}_{\rm in}(t).
\end{equation}

In the Schr{\"o}dinger picture, the conditioned density operator $\rho_c$ is governed by the following quantum stochastic master equation (QSME)
\begin{equation}\label{Jan12-2}
d\rho_c(t)=\mathcal{L}^\star(\rho_c(t))dt+\{\mbf{L}^\top\rho_c(t)+\rho_c(t)\mbf{L}^\dagger-{\rm Tr}[\rho_c(t)(\mbf{L}^\top+\mbf{L}^\dagger)]\rho_c(t)\}d\nu(t),
\end{equation}
where the superoperator $\mathcal{L}^\star(\rho_c(t))$ is defined by
\begin{equation}
\mathcal{L}^\star(\rho_c(t))=-\imath[\mbf{H},\rho_c(t)]+\mbf{L}^\top \rho_c(t) \mbf{L}^\#-\frac{1}{2}\mbf{L}^\dagger \mbf{L}\rho_c(t)-\frac{1}{2}\rho_c(t)\mbf{L}^\dagger \mbf{L},
\end{equation}
and the innovation process 
\begin{equation}
d\nu(t)=d\mbf{Q}_{\rm out}(t)-{\rm Tr}
[\rho_c(t)(\mbf{L}+\mbf{L}^\#)]dt,
\end{equation}
where $d\nu(t)d\nu^\top(t)=dt$, and $\mbf{Q}_{\rm out}(t)=\frac{1}{\sqrt{2}}\left(\mbf{B}_{\rm out}(t)+\mbf{B}_{\rm out}^\#(t)\right)$ being assumed to be observed continuously.

\section{Realization of BAE measurements}\label{sec:qndbae}

A linear quantum system with the triple $(S,\mbf{L},\mbf{H})$ language \cite{GJ09} can be described in the annihilation-creation form
\begin{equation}\label{system:ani}\begin{aligned}
\dot{\breve{\mbf{a}}}&=\mathcal{A}\breve{\mbf{a}}+\mathcal{B}\breve{\mbf{b}}_{\rm in}, \\
\breve{\mbf{b}}_{\rm out}&=\mathcal{C}\breve{\mbf{a}}+\mathcal{D}\breve{\mbf{b}}_{\rm in},
\end{aligned}\end{equation}
where the system matrices 
\begin{equation}\begin{aligned}
&\mathcal{C}=\Delta(C_-,C_+), ~~ \mathcal{D}=\Delta(S,0), \\
&\mathcal{B}=-\mathcal{C}^\flat\mathcal{D}, ~~ \mathcal{A}=-\imath J_n\Omega-\frac{1}{2}\mathcal{C}^\flat\mathcal{C}.
\end{aligned}\end{equation}

Alternatively, the linear quantum system \eqref{system:ani} is equivalent to the following
(real) quadrature operator representation

\begin{equation}\label{system:qua}\begin{aligned}
\dot{\mbf{x}}&=\mathbb{A}\mbf{x}+\mathbb{B}\mbf{u}, \\
\mbf{y}&=\mathbb{C}\mbf{x}+\mathbb{D}\mbf{u},
\end{aligned}\end{equation}
where system state, input, and output are
\begin{equation}\begin{aligned}
\mbf{x}=\left[\begin{array}{c}
\mbf{q} \\
\mbf{p}
\end{array}\right], ~~ \mbf{u}=\left[\begin{array}{c}
\mbf{q}_{\rm in} \\
\mbf{p}_{\rm in}
\end{array}\right], ~~ \mbf{y}=\left[\begin{array}{c}
\mbf{q}_{\rm out} \\
\mbf{p}_{\rm out}
\end{array}\right],
\end{aligned}\end{equation}
respectively, and
\begin{equation}\la{eq:mar24_ABCD}
\begin{aligned}
&\mathbb{D}=V_m \mathcal{D} V_m^\dagger =\left[
                                          \begin{array}{cc}
                                            \mathrm{Re}(S) & -\mathrm{Im}(S) \\
                                            \mathrm{Im}(S) & \mathrm{Re}(S) \\
                                          \end{array}
                                        \right], ~~~~ \mathbb{C}=V_m \mathcal{C} V_n^\dagger =\left[
                                                                             \begin{array}{cc}
                                                                               \mathrm{Re}(C_-+C_+) & -\mathrm{Im}(C_--C_+) \\
                                                                               \mathrm{Im}(C_-+C_+) & \mathrm{Re}(C_--C_+) \\
                                                                             \end{array}
                                                                           \right], \\
&\mathbb{B}=V_n \mathcal{B} V_m^\dagger = -\left[
                                            \begin{array}{cc}
                                              \mathrm{Re}(C_-^\dagger-C_+^\dagger) & -\mathrm{Im}(C_-^\dagger-C_+^\dagger) \\
                                              \mathrm{Im}(C_-^\dagger+C_+^\dagger) & \mathrm{Re}(C_-^\dagger+C_+^\dagger) \\
                                            \end{array}
                                          \right]\mathbb{D}, ~~~~ \mathbb{A} = V_n \mathcal{A} V_n^\dagger = \mathbb{J}_n\mathbb{H}-\frac{1}{2}\mathbb{C}^\sharp \mathbb{C},
\end{aligned}
\end{equation}
with
\begin{equation} \la{eq:mar24_JH}
\mathbb{J}_n\mathbb{H}=\mathbb{J}_n V_n \Omega V_n^\dagger=\left[
                                                             \begin{array}{cc}
                                                               \mathrm{Im}(\Omega_-+\Omega_+) & \mathrm{Re}(\Omega_--\Omega_+) \\
                                                               -\mathrm{Re}(\Omega_-+\Omega_+) & \mathrm{Im}(\Omega_--\Omega_+) \\
                                                             \end{array}
                                                           \right],
\end{equation}
and
\begin{equation} \label{eq:apr_C_sharp_C}
\mathbb{C}^\sharp\mathbb{C}=\left[
                                                    \begin{array}{cc}
                                                      \mathrm{Re}(C_-^\dagger C_- - C_+^\dagger C_+ +C_-^\dagger C_+ - C_+^\top C_-^\#) & -\mathrm{Im}(C_-^\dagger C_- + C_+^\dagger C_+ +C_-^\dagger C_+ - C_+^\top C_-^\#) \\
                                                      \mathrm{Im}(C_-^\dagger C_- + C_+^\dagger C_+ -C_-^\dagger C_+ + C_+^\top C_-^\#) & \mathrm{Re}(C_-^\dagger C_- - C_+^\dagger C_+ -C_-^\dagger C_+ + C_+^\top C_-^\#) \\
                                                    \end{array}
                                                  \right].
\end{equation}

In the following subsections, we respectively address the sufficient conditions under which bilateral and unilateral quantum BAE measurements can be implemented.


\subsection{Bilateral quantum BAE measurements}



Firstly, we assume that $\Omega$ is purely imaginary and $\mathcal{C}$ is real. The scattering operator $S$ is real, e.g., $S=I$ for simplicity.
In this case, Eq. \eqref{eq:mar24_JH} reduces to
\begin{equation} \label{eq:mar24_JH5}
\mathbb{J}_n\mathbb{H}=-\imath \left[
                         \begin{array}{cc}
                           \Omega_-+\Omega_+ & 0 \\
                           0 & \Omega_--\Omega_+ \\
                         \end{array}
                       \right].
\end{equation}
Similarly, by Eq. \eqref{eq:mar24_ABCD},
\begin{equation}\la{eq:mar24_BC}
\begin{aligned}
& \mathbb{C}=V_m \mathcal{C} V_n^\dagger =\left[
                                                                             \begin{array}{cc}
                                                                              C_-+C_+ & 0 \\
                                                                               0 & C_--C_+ \\
                                                                             \end{array}
                                                                           \right], \\
&\mathbb{B}=V_n \mathcal{B} V_m^\dagger =-\left[
                                            \begin{array}{cc}
                                              (C_--C_+)^\top & 0 \\
                                            0& (C_-+C_+)^\top) \\
                                            \end{array}
                                          \right],
\end{aligned}
\end{equation}
and accordingly  Eq. \eqref{eq:apr_C_sharp_C} reduces to
\begin{equation}\la{eq:mar24_A}\begin{aligned}
\mathbb{C}^\sharp\mathbb{C}
 =\left[
                                                    \begin{array}{cc}
                                                     (C_--C_+)^\top (C_-+C_+) & 0 \\
                                                      0&(C_-+C_+)^\top  (C_--C_+)  \\
                                                    \end{array}
                                                  \right].
     \end{aligned}
\end{equation}
According to Eqs. \eqref{eq:mar24_JH5} and \eqref{eq:mar24_A}, we have
\beq\la{eqLmar34_A5}
\mathbb{A} = -\imath \left[
                         \begin{array}{cc}
                           \Omega_-+\Omega_+ & 0 \\
                           0 & \Omega_--\Omega_+ \\
                         \end{array}
                       \right]-\frac{1}{2}\left[
                                                    \begin{array}{cc}
                                                     (C_--C_+)^\top (C_-+C_+) & 0 \\
                                                      0&(C_-+C_+)^\top  (C_--C_+)  \\
                                                    \end{array}
                                                  \right].
\eeq

Denote the matrices
\beq\label{eq:jun29_C+-}
\mathbb{C}_q \triangleq C_-+C_+,  \ \ \ 
\mathbb{C}_p \triangleq C_--C_+.
\eeq
The following result can be derived directly.

\bprop\label{prop:BAE}
If $\Omega$ is purely imaginary, both $S$ and $\mathcal{C}$ are real, then the transfer function is of the form
\beq\label{Mar15-1}
\begin{aligned}
&\mathbb{G}[s] =  \left[ 
\bey{cc}
\mathbb{G}_q[s] & 0 \\
0  & \mathbb{G}_p[s] 
\eey
\right],
\end{aligned}
\eeq
where
\beq
\begin{aligned}
\mathbb{G}_q[s]=
S-\mathbb{C}_q [sI+\imath(\Omega_-+\Omega_+)+\frac1{2}\mathbb{C}_p^\top \mathbb{C}_q]^{-1} \mathbb{C}_p^\top, \\
\mathbb{G}_p[s]=
S-\mathbb{C}_p [sI+\imath(\Omega_--\Omega_+)+\frac1{2}\mathbb{C}_q^\top \mathbb{C}_p]^{-1} \mathbb{C}_q^\top.
\end{aligned}
\eeq
\eprop

\begin{remark}
In this case, BAE measurements of $\mbf{q}_{\rm out}$ with respect to $\mbf{p}_{\rm in}$ and $\mbf{p}_{\rm out}$ with respect to $\mbf{q}_{\rm in}$  are realized \cite[Theorem 4.1]{ZPL20}. Moreover, noticing that the system can be non-passive as $C_+$ and $\Omega_+$ can be nonzero in this case. The non-degenerate parametric amplifier (NDPA) (after rotation)  in \cite{SY21}, the DPA,  and the model studied in \cite{BQND24} (ignoring the perturbation term) belong to this special case.
\end{remark}

On the other hand, if $\mathcal{C}$ is purely imaginary in Proposition \ref{prop:BAE}, the system matrices $\mathbb{C}$ and $\mathbb{B}$ can be described by the following block off-diagonal forms

\begin{equation}\begin{aligned}
&\mathbb{C}=\left[\begin{array}{cc}
0 & -\mathrm{Im}(C_--C_+) \\
\mathrm{Im}(C_-+C_+) & 0
\end{array}\right], \\
&\mathbb{B}=\left[\begin{array}{cc}
0 & \mathrm{Im}(C_-^\dagger-C_+^\dagger) \\
-\mathrm{Im}(C_-^\dagger+C_+^\dagger) & 0
\end{array}\right].
\end{aligned}\end{equation}
Then
\begin{equation}\begin{aligned}
\mathbb{C}^\sharp\mathbb{C}=
\left[\begin{array}{cc}
(C_-^\dagger - C_+^\dagger)(C_- + C_+) & 0 \\
0 & (C_-^\dagger + C_+^\dagger)(C_- - C_+)
\end{array}\right],
\end{aligned}\end{equation} 
which means that the system matrix $\mathbb{A}$ is block diagonal. Consequently, the transfer function in the real quadrature form is also block diagonal, which indicates that in this case the system position $\mbf{q}$ and momentum $\mbf{p}$ are separable. Specifically, the dynamical evolution of this system is expressed by

\begin{equation}\label{Jan9-1}\begin{aligned}
\dot{\mbf{q}} =& \left(-\imath(\Omega_-+\Omega_+)-
\frac{1}{2}(C_- - C_+)^\dagger(C_-+C_+)\right)\mbf{q}+ \mathrm{Im}(C_--C_+)^\dagger\mbf{p}_{\rm in}, \\
\dot{\mbf{p}} =& \left(-\imath(\Omega_--\Omega_+)-
\frac{1}{2}(C_- + C_+)^\dagger(C_--C_+)\right)
\mbf{p} -\mathrm{Im}(C_-+C_+)^\dagger\mbf{q}_{\rm in}, \\
\mbf{q}_{\mathrm{out}}=&\; -\mathrm{Im}(C_--C_+)\mbf{p} +\mbf{q}_{\rm in}, \\
\mbf{p}_{\mathrm{out}} =&\; \mathrm{Im}(C_-+C_+)\mbf{q} + \mbf{p}_{\rm in},
\end{aligned}\end{equation}

where quantum BAE measurements of $\mbf{q}_{\rm out}$ with respect to $\mbf{p}_{\rm in}$ and $\mbf{p}_{\rm out}$ with respect to $\mbf{q}_{\rm in}$ are realized \cite[Theorem 4.1]{ZPL20}. Moreover, noticing that the system can be non-passive as $C_+$ and $\Omega_+$ can be nonzero in this case. When $C_-=C_+$, $\mbf{q}$ is a QND variable if the subsystem $(\imath(\Omega_-+\Omega_+),0,C_-)$ is observable; when $C_-=-C_+$, $\mbf{p}$ is a QND variable if the subsystem $(\imath(\Omega_--\Omega_+),0,C_-)$ is observable \cite[Remark 4.9]{ZGPG18}.

The following lemma summarizes the main results of the preceding two special cases.

\begin{lemma}\label{lem:BAE1}
When $S$ is real, $\Omega$ is purely imaginary and $\mathcal{C}$ is real or purely imaginary, quantum BAE measurements of $\mbf{q}_{\rm out}$ with respect to $\mbf{p}_{\rm in}$ and $\mbf{p}_{\rm out}$ with respect to $\mbf{q}_{\rm in}$ are realized.
If further 
\begin{itemize}
\item $C_-=C_+$ ($\mbf{q}$ is coupled with the environment), then $\mbf{q}$ is a QND variable if the subsystem $(\imath(\Omega_-+\Omega_+),0,C_-)$ is observable;

\item $C_-=-C_+$ ($\mbf{p}$ is coupled with the environment), then $\mbf{p}$ is a QND variable if the subsystem $(\imath(\Omega_--\Omega_+),0,C_-)$ is observable.
\end{itemize}
\end{lemma}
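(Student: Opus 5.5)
The plan is to split on whether $\mathcal{C}$ is real or purely imaginary, show in each case that the quadrature realization has a block structure making the cross transfer functions vanish, and then treat the QND claims separately.

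\emph{Case $\mathcal{C}$ real.} Substituting the reality assumptions into \eqref{eq:mar24_ABCD}, \eqref{eq:mar24_JH} and \eqref{eq:apr_C_sharp_C} gives \eqref{eq:mar24_JH5}, \eqref{eq:mar24_BC} and \eqref{eqLmar34_A5}. Thus $\mathbb{A}$, $\mathbb{B}$, $\mathbb{C}$ and $\mathbb{D}=\mathrm{diag}(S,S)$ are all block diagonal with respect to the splitting $(\mbf{q},\mbf{p})$. Hence
$$
\mathbb{G}[s]=\mathbb{D}+\mathbb{C}(sI-\mathbb{A})^{-1}\mathbb{B}
$$
is block diagonal, which is Proposition~\ref{prop:BAE}. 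The off-diagonal blocks are the transfer functions $\mbf{p}_{\rm in}\to\mbf{q}_{\rm out}$ and $\mbf{q}_{\rm in}\to\mbf{p}_{\rm out}$. Both vanish identically, which is exactly the BAE requirement, following \cite[Theorem 4.1]{ZPL20}.

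\emph{Case $\mathcal{C}$ purely imaginary.} Here $\mathbb{C}$ and $\mathbb{B}$ are block off-diagonal, while $\mathbb{C}^\sharp\mathbb{C}$ and $\mathbb{J}_n\mathbb{H}$ remain block diagonal, so $\mathbb{A}$ is block diagonal. Since $S$ is real, $\mathbb{D}$ is block diagonal. The product $\mathbb{C}(sI-\mathbb{A})^{-1}\mathbb{B}$ then has the pattern (off-diagonal)(diagonal)(off-diagonal), which is block diagonal. Concretely, \eqref{Jan9-1} shows $\mbf{q}_{\rm out}$ is driven only by $\mbf{p}$, which in turn is driven only by $\mbf{q}_{\rm in}$; symmetrically, $\mbf{p}_{\rm out}$ depends only on $\mbf{p}_{\rm in}$. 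The same off-diagonal transfer functions therefore vanish, and both BAE statements follow in this case as well.

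\emph{QND claims.} Take $C_-=C_+$, so $\mathbb{C}_p=C_--C_+=0$. In the real case, \eqref{eq:mar24_BC} gives $\mathbb{B}_{qq}=-\mathbb{C}_p^\top=0$, and the dissipative term $\mathbb{C}_p^\top\mathbb{C}_q$ in the $\mbf{q}$-block of $\mathbb{A}$ vanishes. Therefore
$$
\dot{\mbf{q}}=-\imath(\Omega_-+\Omega_+)\mbf{q},
$$
so $\mbf{q}$ evolves autonomously. It is unaffected by both inputs and by $\mbf{p}$, which is the QND property. The output $\mbf{q}_{\rm out}=2C_-\mbf{q}+S\mbf{q}_{\rm in}$ carries $\mbf{q}$, and the observability hypothesis on $(\imath(\Omega_-+\Omega_+),0,C_-)$ ensures $\mbf{q}$ is genuinely reconstructible from the output. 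The imaginary case is analogous via \eqref{Jan9-1}: the coefficient $\mathrm{Im}(C_--C_+)$ vanishes, so $\mbf{q}$ is free of $\mbf{p}_{\rm in}$ and appears in $\mbf{p}_{\rm out}$. The case $C_-=-C_+$ is symmetric, with the roles of $\mbf{q}$ and $\mbf{p}$ exchanged and $\Omega_--\Omega_+$ in place of $\Omega_-+\Omega_+$. One can also simply invoke \cite[Remark 4.9]{ZGPG18}, which identifies QND variables with this autonomous observable subsystem structure.

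The main obstacle is the bookkeeping in the purely imaginary case. One must check carefully that the sign conventions and the conjugate-transposes in $\mathbb{C}^\sharp\mathbb{C}$ indeed give a block-diagonal $\mathbb{A}$, and that the purely imaginary Hamiltonian contributes no off-diagonal terms. I would verify this directly from \eqref{eq:apr_C_sharp_C}: with $C_\pm$ purely imaginary, the $\mathrm{Im}$ entries of products of two imaginary matrices vanish, so the off-diagonal blocks are zero.
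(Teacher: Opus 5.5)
Your proposal is correct and follows essentially the same route as the paper. In both the real and purely imaginary cases for $\mathcal{C}$, you use the block-diagonal (resp.\ block off-diagonal) structure of $\mathbb{A},\mathbb{B},\mathbb{C},\mathbb{D}$ in the quadrature realization to conclude that $\mathbb{G}[s]$ is block diagonal; you then get the QND claims for $C_-=\pm C_+$ from the autonomous, input-free dynamics of $\mathbf{q}$ (resp.\ $\mathbf{p}$) together with the observability hypothesis, which is what the paper does by invoking \cite[Theorem 4.1]{ZPL20} and \cite[Remark 4.9]{ZGPG18}, with your explicit check that the input and dissipative terms in the relevant block vanish filling in a step the paper only asserts.
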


Secondly, if the scattering operator $S$ is purely imaginary, e.g., $S=-\imath I$, then the system matrix $\mathbb{D}=\left[\begin{array}{cc}
0 & I \\
-I & 0
\end{array}\right]$ is block off-diagonal. If further $\mathcal{C}$ is assumed to be real, then $\mathbb{C}=\left[\begin{array}{cc}
C_-+C_+ & 0 \\
0 & C_--C_+
\end{array}\right]$, and $\mathbb{B}=\left[\begin{array}{cc}
0 & -(C_--C_+)^\top \\
(C_-+C_+)^\top & 0
\end{array}\right]$. In this case, the transfer function $\mathbb{G}[s]$ is also block off-diagonal, and thus BAE measurements of $\mbf{q}_{\rm out}$ with respect to $\mbf{q}_{\rm in}$ and $\mbf{p}_{\rm out}$ with respect to $\mbf{p}_{\rm in}$  are realized. Analogous bilateral quantum BAE measurements can be realized when $\mathcal{C}$ is purely imaginary. We have the following result.

\begin{lemma}\label{lem:BAE2}
When both $S$ and $\Omega$ are purely imaginary, $\mathcal{C}$ is real or purely imaginary, quantum BAE measurements of  $\mbf{q}_{\rm out}$ with respect to $\mbf{q}_{\rm in}$ and $\mbf{p}_{\rm out}$ with respect to $\mbf{p}_{\rm in}$  are realized.
If further 
\begin{itemize}
\item $C_-=C_+$ ($\mbf{q}$ is coupled with the environment), then $\mbf{q}$ is a QND variable if the subsystem $(\imath(\Omega_-+\Omega_+),0,C_-)$ is observable;

\item $C_-=-C_+$ ($\mbf{p}$ is coupled with the environment), then $\mbf{p}$ is a QND variable if the subsystem $(\imath(\Omega_--\Omega_+),0,C_-)$ is observable.
\end{itemize}
\end{lemma}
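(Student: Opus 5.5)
The plan is to compute the quadrature matrices $\mathbb{A},\mathbb{B},\mathbb{C},\mathbb{D}$ from \eqref{eq:mar24_ABCD} under the new hypotheses, show that the transfer function $\mathbb{G}[s]=\mathbb{D}+\mathbb{C}(sI-\mathbb{A})^{-1}\mathbb{B}$ is block off-diagonal, and then read off the BAE statements. For the QND part I would reduce to the case already covered by Lemma \ref{lem:BAE1}.

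I first fix $S$ purely imaginary; for concreteness take $S=-\imath I$, and for a general $S=\imath S_0$ with $S_0$ real the argument is identical. Then $\mathrm{Re}(S)=0$, so $\mathbb{D}=\bigl[\begin{smallmatrix}0&-\mathrm{Im}(S)\\ \mathrm{Im}(S)&0\end{smallmatrix}\bigr]$ is block off-diagonal. Since $\Omega$ is purely imaginary, $\mathbb{J}_n\mathbb{H}$ is block diagonal by \eqref{eq:mar24_JH5}.

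Next I split into two cases according to whether $\mathcal{C}$ is real or purely imaginary. In either case $\mathbb{C}$ and the prefactor matrix in $\mathbb{B}$ (the matrix multiplying $\mathbb{D}$) have the same block pattern:
\begin{itemize}
\item if $\mathcal{C}$ is real, both are block diagonal, as in \eqref{eq:mar24_BC};
\item if $\mathcal{C}$ is purely imaginary, both are block off-diagonal, as displayed before \eqref{Jan9-1}.
\end{itemize}
The product of two matrices of the same pattern is block diagonal. Hence $\mathbb{C}^\sharp\mathbb{C}$ is block diagonal, because the $\sharp$-adjoint preserves both patterns since $\mathbb{J}_n$ is itself off-diagonal. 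This is exactly what \eqref{eq:mar24_A} and its imaginary analogue show, and it does not involve $S$. Therefore $\mathbb{A}$ is block diagonal and $(sI-\mathbb{A})^{-1}$ is block diagonal.

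For $\mathbb{B}$ I use the parity rule (diagonal $\times$ off-diagonal $=$ off-diagonal; off-diagonal $\times$ off-diagonal $=$ diagonal). Multiplying the prefactor by the off-diagonal $\mathbb{D}$ flips its pattern. So $\mathbb{B}$ is block off-diagonal when $\mathcal{C}$ is real, and block diagonal when $\mathcal{C}$ is imaginary.

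Combining the pieces, $\mathbb{C}(sI-\mathbb{A})^{-1}\mathbb{B}$ is block off-diagonal in both cases:
\begin{itemize}
\item real $\mathcal{C}$: (diagonal)(diagonal)(off-diagonal);
\item imaginary $\mathcal{C}$: (off-diagonal)(diagonal)(diagonal).
\end{itemize}
Adding the off-diagonal $\mathbb{D}$ keeps $\mathbb{G}[s]$ block off-diagonal. The diagonal blocks, which are the transfer functions from $\mbf{q}_{\rm in}$ to $\mbf{q}_{\rm out}$ and from $\mbf{p}_{\rm in}$ to $\mbf{p}_{\rm out}$, therefore vanish identically. That is the stated BAE property, in the sense of \cite[Theorem 4.1]{ZPL20}.

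For the QND claims, observe that the hypotheses on $S$ enter only through $\mathbb{B}$ and $\mathbb{D}$, which act on the inputs. The autonomous part $\mathbb{A}$ and the output map $\mathbb{C}$ coincide with those of Lemma \ref{lem:BAE1}. Take $C_-=C_+$ (say $\mathcal{C}$ real):
\begin{itemize}
\item $\mathbb{C}_p=0$, so the $\mbf{q}$-block of $\mathbb{A}$ reduces to $-\imath(\Omega_-+\Omega_+)$;
\item $\mbf{q}$ evolves independently of $\mbf{p}$;
\item $\mbf{q}$ is driven by no input, since the only nonzero $\mathbb{B}$ entry now feeds $\mbf{p}$, as $(C_-+C_+)^\top$ acts on the $\mbf{p}$ equation;
\item $\mbf{q}$ appears in the output through $\mathbb{C}_q=2C_-$.
\end{itemize}
Thus $\mbf{q}$ is a QND variable whenever the subsystem $(\imath(\Omega_-+\Omega_+),0,C_-)$ is observable, by \cite[Remark 4.9]{ZGPG18}. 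The cases $C_-=-C_+$, and purely imaginary $\mathcal{C}$, follow symmetrically.

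The step I expect to be the main obstacle is the block-parity bookkeeping, especially checking that the $\sharp$-adjoint in $\mathbb{C}^\sharp\mathbb{C}$ and the factor $\mathbb{D}$ in $\mathbb{B}$ produce the claimed patterns in each case. I would verify this explicitly from the entrywise formulas in \eqref{eq:mar24_ABCD} and \eqref{eq:apr_C_sharp_C}.
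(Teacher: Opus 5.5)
Your proposal is correct and follows the paper's own argument. The paper notes that $\mathbb{D}$ is block off-diagonal, that $\mathbb{A}$ is block diagonal, and that $\mathbb{C}$ and $\mathbb{B}$ have the appropriate block patterns, so $\mathbb{G}[s]$ is block off-diagonal; it then simply carries over the QND conclusions of Lemma~\ref{lem:BAE1} via \cite[Remark 4.9]{ZGPG18}, and your checks (that $\sharp$ preserves block patterns, and that $S$ enters only through $\mathbb{B}$ and $\mathbb{D}$, so $\mathbb{A}$, $\mathbb{C}$ and the set of undriven states are unchanged) fill in steps the paper only asserts. The sign mismatch between the $\mathbf{q}$-block $-\imath(\Omega_-+\Omega_+)$ and the stated subsystem $(\imath(\Omega_-+\Omega_+),0,C_-)$ is harmless, since $(A,C)$ and $(-A,C)$ are observable simultaneously.
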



\subsection{Unilateral quantum BAE measurement}

In this section, we continue to discuss a more general case where $\Omega_-$ and $\Omega_+$ are not purely imaginary, and their real parts are either equal or opposite in sign.

\subsubsection{$\Omega_-$ and $\Omega_+$ share the same real part}

In this case, $\mathbb{J}_n\mathbb{H}$ is block lower triangular, and so is system matrix $\mathbb{A}$. We address the following four special cases.

\begin{itemize}

    \item Both $S$ and $\mathcal{C}$ are real matrices, the transfer function $\mathbb{G}[s]$ is block lower triangular, and thus BAE measurements of $\mbf{q}_{\rm out}$ with respect to $\mbf{p}_{\rm in}$ is realized. 

    \item $S$ is real and $\mathcal{C}$ is purely imaginary, the transfer function $\mathbb{G}[s]$ is block upper triangular, and thus BAE measurements of $\mbf{p}_{\rm out}$ with respect to $\mbf{q}_{\rm in}$ is realized.
    
    \item $S$ is purely imaginary and $\mathcal{C}$ is real, the top-left 1-by-1 block of the transfer function $\mathbb{G}[s]$ is zero, and thus BAE measurements of $\mbf{q}_{\rm out}$ with respect to $\mbf{q}_{\rm in}$ is realized.
    
    \item Both $S$ and $\mathcal{C}$ are purely imaginary matrices, the bottom-right 2-by-2 block of the transfer function $\mathbb{G}[s]$ is zero, and thus BAE measurements of $\mbf{p}_{\rm out}$ with respect to $\mbf{p}_{\rm in}$ is realized.

\end{itemize}

\subsubsection{$\Omega_-$ and $\Omega_+$ have opposite real parts}

In this case, $\mathbb{J}_n\mathbb{H}$ is block upper triangular, and so is system matrix $\mathbb{A}$. We address the following four special cases.

\begin{itemize}

    \item Both $S$ and $\mathcal{C}$ are real matrices, the transfer function $\mathbb{G}[s]$ is block upper triangular, and thus BAE measurements of $\mbf{p}_{\rm out}$ with respect to $\mbf{q}_{\rm in}$ is realized. 

    \item $S$ is real and $\mathcal{C}$ is purely imaginary, the transfer function $\mathbb{G}[s]$ is block lower triangular, and thus BAE measurements of $\mbf{q}_{\rm out}$ with respect to $\mbf{p}_{\rm in}$ is realized.
    
    \item $S$ is purely imaginary and $\mathcal{C}$ is real, the bottom-right 2-by-2 block of the transfer function $\mathbb{G}[s]$ is zero, and thus BAE measurements of $\mbf{p}_{\rm out}$ with respect to $\mbf{p}_{\rm in}$ is realized.
    
    \item Both $S$ and $\mathcal{C}$ are purely imaginary matrices, the top-left 1-by-1 block of the transfer function $\mathbb{G}[s]$ is zero, and thus BAE measurements of $\mbf{q}_{\rm out}$ with respect to $\mbf{q}_{\rm in}$ is realized.

\end{itemize}

\subsection{A special case}

In this section, we no longer impose constraints on the system Hamiltonian $\Omega$. Instead, we focus on the case where the coupling operator matrix $\mathcal{C}$ is purely imaginary and satisfies $C_-=\pm C_+$, i.e., $\mbf{q}$-coupling or $\mbf{p}$-coupling cases. 

\begin{corollary}\label{cor:BAE}
When $S$ is real, $\mathcal{C}$ is purely imaginary, and $C_-=C_+$,
quantum BAE measurement of $\mbf{q}_{\rm out}$ with respect to $\mbf{p}_{\rm in}$ is realized.
\end{corollary}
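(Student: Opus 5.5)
The plan is to read off the block structure of the quadrature realization \eqref{system:qua} directly from \eqref{eq:mar24_ABCD}. The hypotheses are that $\mathcal{C}$ is purely imaginary with $C_-=C_+$ and that $S$ is real. No assumption is made on $\Omega$, so the argument cannot rely on $\mathbb{A}$ being block diagonal. Instead, I will show that the relevant input $\mbf{p}_{\rm in}$ never reaches the state at all. Write $C_-=C_+=\imath K$ with $K$ real.

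\textbf{Step 1: the block forms of $\mathbb{C}$, $\mathbb{D}$ and $\mathbb{B}$.}
\begin{itemize}
\item Since $C_-+C_+=2\imath K$ and $C_--C_+=0$, we get $\mathbb{C}=\left[\begin{smallmatrix}0&0\\ 2K&0\end{smallmatrix}\right]$.
\item Since $S$ is real, $\mathbb{D}=\mathrm{diag}(S,S)$.
\item For $\mathbb{B}$, note that $C_-^\dagger-C_+^\dagger=0$ and $C_-^\dagger+C_+^\dagger=-2\imath K^\top$. Hence
\[
\mathbb{B}=-\left[\begin{smallmatrix}0&0\\ -2K^\top&0\end{smallmatrix}\right]\mathbb{D}=\left[\begin{smallmatrix}0&0\\ 2K^\top S&0\end{smallmatrix}\right].
\]
\end{itemize}
Consequently the second block column of $\mathbb{B}$, which multiplies $\mbf{p}_{\rm in}$, vanishes. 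Likewise the first block row of $\mathbb{C}$, which produces $\mbf{q}_{\rm out}$, vanishes.

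\textbf{Step 2: the transfer function.} Write $\mathbb{G}[s]=\mathbb{D}+\mathbb{C}(sI-\mathbb{A})^{-1}\mathbb{B}$. Its top-right block (from $\mbf{p}_{\rm in}$ to $\mbf{q}_{\rm out}$) is
\[
\mathbb{D}_{12}+[\mathbb{C}]_{1\cdot}(sI-\mathbb{A})^{-1}[\mathbb{B}]_{\cdot 2}.
\]
Here $\mathbb{D}_{12}=-\mathrm{Im}(S)=0$. Both $[\mathbb{C}]_{1\cdot}$ and $[\mathbb{B}]_{\cdot 2}$ are zero, so this holds for any $\mathbb{A}$. In fact the output equation reduces to $\mbf{q}_{\rm out}=S\mbf{q}_{\rm in}$, so the block is identically zero whatever $\Omega$ is. By the transfer-function characterization of BAE used throughout (as in \cite[Theorem 4.1]{ZPL20}), this gives the BAE measurement of $\mbf{q}_{\rm out}$ with respect to $\mbf{p}_{\rm in}$.

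\textbf{Main obstacle.} The only delicate point is the sign and conjugation bookkeeping in the $\mathbb{B}$ formula of \eqref{eq:mar24_ABCD}. There the entries involve $C^\dagger$, so imaginary parts flip sign. I will therefore verify the vanishing entries directly, using $\mathcal{B}=-\mathcal{C}^\flat\mathcal{D}$ and $V_n$. The structural fact itself is robust under this check: the vanishing comes from $C_--C_+=0$, not from any sign. Finally, I note that $\mathbb{A}$ plays no role in the argument, which is why no constraint on $\Omega$ is needed.
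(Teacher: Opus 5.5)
Your proposal is correct and follows essentially the paper's route: with $\mathcal{C}$ purely imaginary and $C_-=C_+$, only the $(2,1)$ blocks of $\mathbb{C}$ and $\mathbb{B}$ survive and $\mathbb{D}=\mathrm{diag}(S,S)$, so $\mathbb{G}[s]$ is block lower triangular for every $\Omega$, which is the same block-lower-triangular transfer-function reasoning the paper uses in the Michelson example. One small correction to your closing remark: the zero blocks use both hypotheses, not just $C_--C_+=0$, since $\mathrm{Re}(C_-+C_+)=0$ and $\mathrm{Re}(C_-^\dagger+C_+^\dagger)=0$ come from the pure imaginarity of $\mathcal{C}$.
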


The following example given in \cite[Fig. 3(c)]{NY14}, \cite[Eq. (35)]{ZPL20} demonstrates Corollary \ref{cor:BAE}.

\begin{example}
Consider the Michelson’s interferometer, which is one of the simplest devices for gravitational wave detection, with the system Hamiltonian 
\begin{equation}
\Omega_-=\frac{1}{2}\left[\begin{array}{cc}
m \omega_m^2+\frac{1}{m} & 0 \\
0 & m \omega_m^2+\frac{1}{m}
\end{array}\right], \ \ \Omega_+=\frac{1}{2}\left[\begin{array}{cc}
m \omega_m^2-\frac{1}{m} & 0 \\
0 & m \omega_m^2-\frac{1}{m}
\end{array}\right],
\end{equation}
where $m$ and $\omega_m$ denote the mass of the mechanical
oscillators and the resonant frequency, respectively. The coupling operator matrix $\mathcal{C}$ is purely imaginary and 
\begin{equation}
C_-=C_+=\frac{\sqrt{\lambda}}{2}\left[\begin{array}{cc}
\imath & \imath \\
\imath & -\imath
\end{array}\right],    
\end{equation}
where $\lambda$ is the coupling strength between the input field and the mechanical oscillators. The scattering matrix $S$ is chosen to be identity, i.e., $S= I$. As a result, the transfer function of the Michelson’s interferometer $\mathbb{G}[s]$ can be calculated as a block lower triangular matrix, which indicates that quantum BAE measurement of $\mbf{q}_{\rm out}$ with respect to $\mbf{p}_{\rm in}$ is realized.
\end{example}

The special case of $\mbf{p}$-coupling results in a block upper triangular system transfer function matrix, which is summarized by the following corollary.

\begin{corollary}
When $S$ is real, $\mathcal{C}$ is purely imaginary, and $C_-=-C_+$,
quantum BAE measurement of $\mbf{p}_{\rm out}$ with respect to $\mbf{q}_{\rm in}$ is realized.    
\end{corollary}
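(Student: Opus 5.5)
The plan is to compute the quadrature-form matrices $(\mathbb{A},\mathbb{B},\mathbb{C},\mathbb{D})$ of \eqref{eq:mar24_ABCD} directly under the three hypotheses. I would then read off the block structure of $\mathbb{G}[s]=\mathbb{D}+\mathbb{C}(sI-\mathbb{A})^{-1}\mathbb{B}$ and show that its bottom-left block, the map from $\mbf{q}_{\rm in}$ to $\mbf{p}_{\rm out}$, vanishes identically in $s$. As in Corollary \ref{cor:BAE}, no assumption on $\Omega$ will be used.

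First, since $S$ is real, $\mathbb{D}=\mathrm{diag}(S,S)$ is block diagonal. Next, $\mathcal{C}$ purely imaginary means $C_\pm$ are purely imaginary, and $C_-=-C_+$ gives $C_-+C_+=0$ and $C_--C_+=2C_-$, with $\mathrm{Re}(C_-)=0$. Substituting into \eqref{eq:mar24_ABCD} yields
\begin{equation*}
\mathbb{C}=\left[\begin{array}{cc} 0 & X \\ 0 & 0\end{array}\right], \qquad X\triangleq -2\,\mathrm{Im}(C_-).
\end{equation*}
Similarly, $C_-^\dagger+C_+^\dagger=0$ and $\mathrm{Re}(C_-^\dagger-C_+^\dagger)=0$ give
\begin{equation*}
\mathbb{B}=-\left[\begin{array}{cc} 0 & -2\,\mathrm{Im}(C_-^\dagger) \\ 0 & 0\end{array}\right]\mathbb{D}=\left[\begin{array}{cc} 0 & Y \\ 0 & 0\end{array}\right],
\end{equation*}
for some real block $Y$; here I use that $\mathbb{D}$ is block diagonal.

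For $\mathbb{A}$, I compute $\mathbb{C}^\sharp=-\mathbb{J}_n\mathbb{C}^\top\mathbb{J}_m$, which for this real $\mathbb{C}$ equals $-\bigl[\begin{smallmatrix}0 & X^\top\\ 0&0\end{smallmatrix}\bigr]$. Hence $\mathbb{C}^\sharp\mathbb{C}=0$, since the product of two strictly block upper triangular matrices of this shape is zero. Therefore $\mathbb{A}=\mathbb{J}_n\mathbb{H}$, which is a general matrix, and the resolvent $M(s)=(sI-\mathbb{A})^{-1}$ has no special structure.

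The key step is the sandwich $\mathbb{C}M(s)\mathbb{B}$. The first block column of $\mathbb{B}$ is zero, so the first block column of $\mathbb{C}M\mathbb{B}$ is zero. Likewise, the second block row of $\mathbb{C}$ is zero, so the second block row of $\mathbb{C}M\mathbb{B}$ is zero. Thus $\mathbb{C}M(s)\mathbb{B}$ has only a top-right block, and
\begin{equation*}
\mathbb{G}[s]=\left[\begin{array}{cc} S & \ast \\ 0 & S\end{array}\right],
\end{equation*}
which is block upper triangular. In particular $\mbf{p}_{\rm out}=S\mbf{p}_{\rm in}$ is independent of $\mbf{q}_{\rm in}$, so the BAE measurement of $\mbf{p}_{\rm out}$ with respect to $\mbf{q}_{\rm in}$ is realized.

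The only delicate point is the sign and transpose bookkeeping in $\mathbb{C}^\sharp\mathbb{C}$ and in $\mathbb{B}$. However, the conclusion only needs the zero pattern of these matrices, not their exact entries, so the argument is robust to sign conventions.
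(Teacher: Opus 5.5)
Your proof is correct and follows the same route as the paper: under these hypotheses $\mathbb{C}$ and $\mathbb{B}$ have only a nonzero top-right block and $\mathbb{D}=\mathrm{diag}(S,S)$, so $\mathbb{G}[s]$ is block upper triangular for any $\Omega$, which is exactly the claim the paper makes just before the corollary. Your computation of $\mathbb{C}^\sharp\mathbb{C}$ and $\mathbb{A}$ is harmless but not needed, since the zero second block row of $\mathbb{C}$ together with the block-diagonal $\mathbb{D}$ already gives $\mbf{p}_{\rm out}=S\mbf{p}_{\rm in}$ directly.
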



\subsection{Realization of quantum BAE measurements via coherent feedback control}

As discussed in the previous section, when the system Hamiltonian $\Omega$ is purely imaginary and the coupling operator matrix $\mathcal{C}$ is real or purely imaginary, quantum BAE measurements can be realized. However, if the sufficient conditions in Lemmas \ref{lem:BAE1}-\ref{lem:BAE2} are not applied to the original system, we can cascade a beamsplitter and thus form a coherent feedback control network to realize quantum BAE measurements. Similar optical system structures of the coherent feedback control for squeezing enhancement \cite[Fig. 1(b)]{IYYYF12} and noise reduction \cite[Fig. 1]{WZO2026} have been demonstrated experimentally. In this section, the original system $\boldsymbol{G}$, consisting of $n$ harmonic oscillators, $\mbf{a}=\left[\begin{array}{cccc}
\mbf{a}_1 & \mbf{a}_2 & \cdots & \mbf{a}_n
\end{array}\right]^\top$, is driven by $m=m_1+m_2$ input channels, whose $m_1$ input channels are described by $u_1$ in Fig. \ref{fig:system}, and the other $m_2$ input channels are represented by $u_2$. The system $\boldsymbol{G}$ is characterized by the following $(S,\mbf{L},\mbf{H})$ parameters
\begin{equation}\begin{aligned}
S_G=\left[\begin{array}{cc}
S_{11} & S_{12} \\
S_{21} & S_{22}
\end{array}\right], \ \mbf{L}_G=\left[\begin{array}{c}
\mbf{L}_1 \\
\mbf{L}_2 
\end{array}\right], \ \Omega_G=\Delta(\Omega_-,\Omega_+),    
\end{aligned}\end{equation}
where the coupling operators of the $m_1$ and $m_2$ channels are 
\begin{equation}\begin{aligned}
\mbf{L}_1=k_{11}\mbf{a}+k_{12}\mbf{a}^\#, \\
\mbf{L}_2=k_{21}\mbf{a}+k_{22}\mbf{a}^\#,
\end{aligned}\end{equation}
with the adjustable coupling strengths $k_{11}$, $k_{12}\in \mathbf{C}^{m_1 \times n}$, $k_{21}$, $k_{22}\in \mathbf{C}^{m_2 \times n}$.


\begin{figure}
    \centering
    \includegraphics[width=0.5\linewidth]{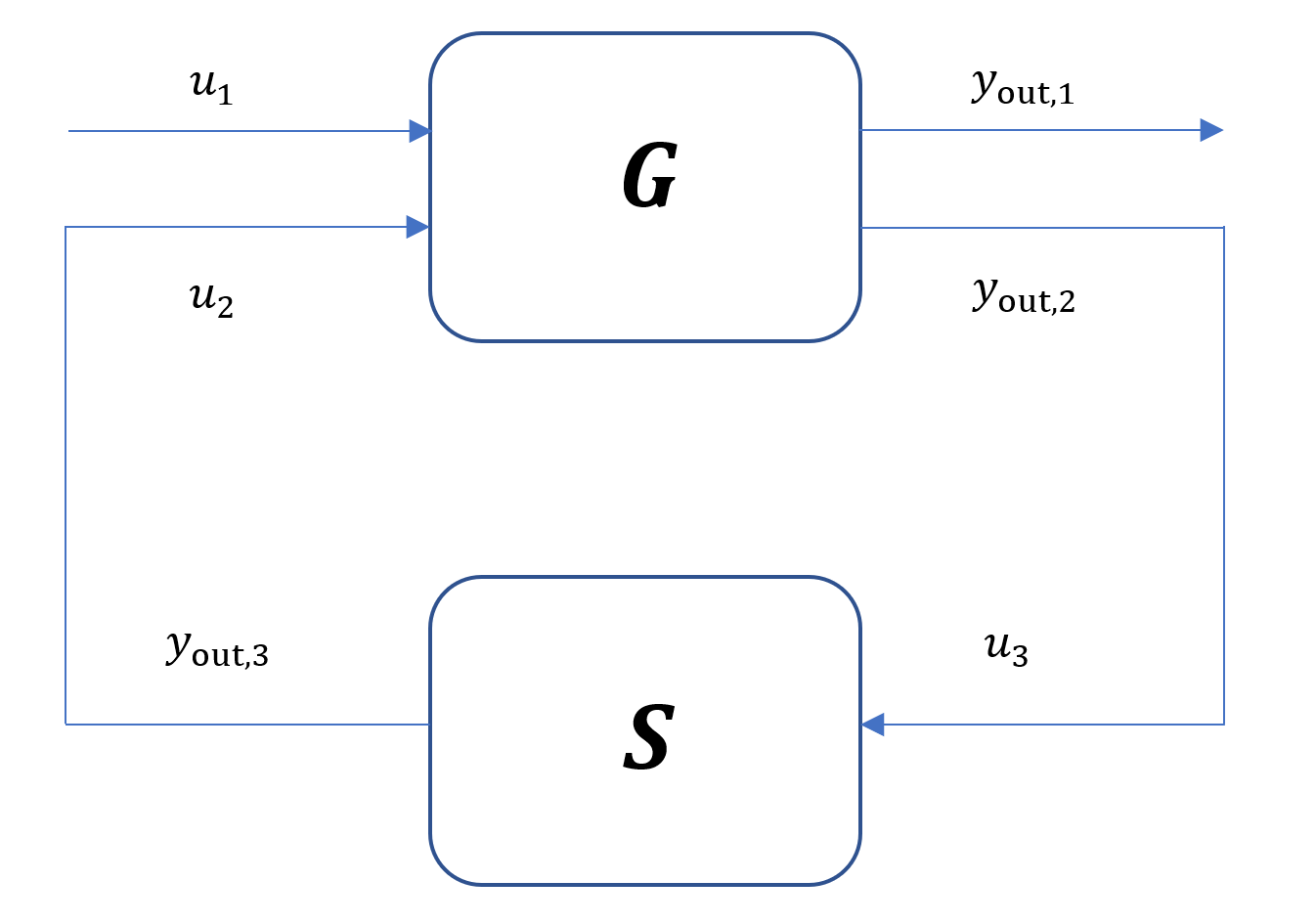}
    \caption{Coherent feedback network.}
    \label{fig:system}
\end{figure}

In contrast to the type-2 coherent feedback control scheme proposed in \cite[Fig. 9]{NY14}, we remove the controller and only use a beamsplitter $\boldsymbol{S}$ to form a coherent feedback network, see Fig. \ref{fig:system}. The feedback control is performed by modulating the corresponding $m_2$ output channels, i.e., $y_{\rm out,2}$, which is also the input $u_3$ of the beamsplitter $\boldsymbol{S}$, and the output $y_{\rm out,3}$ of the beamsplitter $\boldsymbol{S}$ is the input $u_2$ of the system $\boldsymbol{G}$. Without loss of generality, the scattering operator $S_b$ of the beamsplitter is chosen to be $I$ or $\imath I$. The total $(S,\mbf{L},\mbf{H})$ parameters of the coherent feedback network in Fig. \ref{fig:system} can be calculated as
\begin{equation}\begin{aligned}
S^{\rm red}&=S_{11}+S_{12}S_b(I-S_{22}S_b)^{-1} S_{21}, \\
\mbf{L}^{\rm red}&=\left[\begin{array}{cc}
\bar{C}_- & \bar{C}_+  
\end{array}\right]\left[\begin{array}{c}
\mbf{a} \\
\mbf{a}^\#      
\end{array}\right]
\end{aligned}    
\end{equation}
where $\bar{C}_-=k_{11}+S_{12}S_b(I-S_{22}S_b)^{-1}k_{21}$, $\bar{C}_+=k_{12}+S_{12}S_b(I-S_{22}S_b)^{-1}k_{22}$, and the reduced system Hamiltonian 
\begin{equation}
\bar{\Omega}=\Delta(\bar{\Omega}_-,\bar{\Omega}_+),    
\end{equation}
where 
\begin{equation}
\bar{\Omega}_-=\Omega_--\imath(k_{11}^\dagger S_b k_{21}-k_{21}^\dagger S_b^\dagger k_{11}), \ \ \bar{\Omega}_+=\Omega_+-\imath(k_{11}^\dagger S_b k_{22}-k_{21}^\dagger S_b^\dagger k_{12}).
\end{equation}

We have the following sufficient condition to realize BAE measurements of the coherent feedback network in Fig. \ref{fig:system}.
\begin{theorem}\label{thm:BAE}
Appropriate selection of the coupling strength parameters $k_{ij}$, $i,j=1,2$, to render the reduced Hamiltonian $\bar{\Omega}$ purely imaginary enables bilateral BAE measurements in the quantum coherent feedback network.  
\end{theorem}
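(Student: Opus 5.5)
The plan is to reduce Theorem \ref{thm:BAE} to Lemmas \ref{lem:BAE1}--\ref{lem:BAE2}, applied to the reduced network $(S^{\rm red},\mbf{L}^{\rm red},\bar{\Omega})$. Those lemmas need three things: $S^{\rm red}$ real or purely imaginary, $\bar{\Omega}$ purely imaginary, and $\bar{\mathcal{C}}=\Delta(\bar C_-,\bar C_+)$ real or purely imaginary. First I will note that the coherent feedback network of Fig.~\ref{fig:system} is again a linear quantum system in the $(S,\mbf{L},\mbf{H})$ form of Section~\ref{preliminaries}. Its reduced parameters are the ones displayed before the theorem, obtained by series product and feedback reduction. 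Hence its quadrature realization \eqref{system:qua} has matrices given by \eqref{eq:mar24_ABCD}, with $C_\pm,\Omega_\pm,S$ replaced by $\bar C_\pm,\bar\Omega_\pm,S^{\rm red}$.

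The second step fixes the scattering part so that $S^{\rm red}$ has the required type. I will take $S_G$ real (e.g.\ $S_G$ a real beamsplitter-type or permutation matrix) and $S_b\in\{I,\imath I\}$. With $S_b=I$ and $I-S_{22}$ invertible, both $S^{\rm red}$ and the gain $K\triangleq S_{12}S_b(I-S_{22}S_b)^{-1}$ are real. With $S_b=\imath I$ I would simplify by taking $S_{22}=0$. Then $K=\imath S_{12}$ is purely imaginary and $S^{\rm red}=S_{11}+\imath S_{12}S_{21}$; this is real if $S_{12}S_{21}=0$, or purely imaginary if $S_{11}=0$. Next I choose the coupling strengths so that $\bar C_\pm=k_{1\cdot}+Kk_{2\cdot}$ is real or purely imaginary. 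For real $K$, take all $k_{ij}$ real, or all purely imaginary. For purely imaginary $K$, take $k_{11},k_{12}$ purely imaginary and $k_{21},k_{22}$ real, or the reverse.

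The third step, which I expect to be the main obstacle, is to make $\bar{\Omega}$ purely imaginary while keeping the reality constraints above. Write $\bar\Omega_\mp=\Omega_\mp-\imath M_\mp(k)$, with $M_-=k_{11}^\dagger S_bk_{21}-k_{21}^\dagger S_b^\dagger k_{11}$ and $M_+=k_{11}^\dagger S_b k_{22}-k_{21}^\dagger S_b^\dagger k_{12}$. For $S_b=I$ and real $k$, the matrices $M_\pm$ are real, so $\imath M_\pm$ is purely imaginary. The condition then becomes ${\rm Re}(\Omega_-)={\rm Re}(\Omega_+)=0$, and the feedback cannot cancel the real parts. For $S_b=\imath I$ with $k_{11}$ purely imaginary and $k_{21}$ real, however, $\imath M_\pm$ is real. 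The requirement is then the pair of matrix equations
\[
{\rm Re}(\Omega_-)={\rm Re}\bigl(\imath M_-(k)\bigr),\qquad {\rm Re}(\Omega_+)={\rm Re}\bigl(\imath M_+(k)\bigr),
\]
which are bilinear in the $k_{ij}$. I would solve them by fixing $k_{21},k_{22}$ of full column rank. The equations are then linear in $k_{11},k_{12}$, and $M_-$ is Hermitian in the same way as $\Omega_-$ (so that $\bar{\Omega}$ remains doubled-up and Hermitian). A least-squares or explicit right-inverse construction then gives $k_{11},k_{12}$ whenever $m_1,m_2$ are large enough, e.g.\ $m_2\ge n$. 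This is precisely the ``appropriate selection'' in the statement, and I would present it as the hypothesis being assumed satisfiable rather than proving existence in full generality.

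With $\bar\Omega$ purely imaginary, $S^{\rm red}$ real or purely imaginary and $\bar{\mathcal C}$ real or purely imaginary, Lemma~\ref{lem:BAE1} (real $S^{\rm red}$) or Lemma~\ref{lem:BAE2} (purely imaginary $S^{\rm red}$) applies to the reduced system. It yields the bilateral BAE pair: $\mbf{q}_{\rm out}$ versus $\mbf{p}_{\rm in}$ and $\mbf{p}_{\rm out}$ versus $\mbf{q}_{\rm in}$, or $\mbf{q}_{\rm out}$ versus $\mbf{q}_{\rm in}$ and $\mbf{p}_{\rm out}$ versus $\mbf{p}_{\rm in}$, respectively. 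In both cases the transfer function is block diagonal or block off-diagonal, as in Proposition~\ref{prop:BAE}.
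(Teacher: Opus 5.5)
Your last step is what the paper relies on. The paper gives no separate proof of Theorem~\ref{thm:BAE}; its remark and example show that the theorem is Lemma~\ref{lem:BAE1} or Lemma~\ref{lem:BAE2} applied to the reduced triple $(S^{\rm red},\mbf{L}^{\rm red},\bar\Omega)$, with $\bar{\mathcal C}$ and $S^{\rm red}$ each real or purely imaginary. Read purely conditionally, your final paragraph is the whole proof.

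The problem is steps 2--3, which you present as the substance of the ``appropriate selection''. The key claim there is a sign error. With $S_b=\imath I$, $k_{11}=\imath a$ and $k_{21}=b$ ($a,b$ real), one gets $k_{11}^\dagger S_bk_{21}=(-\imath a^\top)(\imath)b=a^\top b$ and $k_{21}^\dagger S_b^\dagger k_{11}=b^\top a$. So $M_-=a^\top b-b^\top a$ is real and $\imath M_-$ is purely imaginary, exactly as in the $S_b=I$ case you discarded. This is structural, not accidental. In every configuration you list, $k_{11},k_{12},S_bk_{21},S_bk_{22}$ are all $e^{\imath\phi}$ times real matrices with a common $\phi$. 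Hence $k_{11}^\dagger S_bk_{2j}$ and $(S_bk_{21})^\dagger k_{12}$ are real, $M_\pm$ are real, and ${\rm Re}(\bar\Omega_\pm)={\rm Re}(\Omega_\pm)$. Your bilinear equations therefore have a solution only when ${\rm Re}(\Omega_\pm)=0$ already, i.e.\ when the feedback is unnecessary. The family of selections you identify with the hypothesis is exactly the one on which the theorem says nothing new.

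The missing idea is that the reality (or imaginarity) of $\bar C_\pm=k_{1\cdot}+Kk_{2\cdot}$ must come from cancellation between the two summands, not from each summand separately. Take the model case $K=S_b$ (e.g.\ $S_{12}=I$, $S_{22}=0$) and put $c=\bar C_-$ real. Then $k_{11}^\dagger S_bk_{21}=k_{11}^\dagger c-k_{11}^\dagger k_{11}$, and ${\rm Im}(k_{11}^\dagger k_{11})$ is antisymmetric, so
\[
{\rm Re}(\bar\Omega_-)={\rm Re}(\Omega_-)-\bigl({\rm Im}(k_{11})^\top c+c^\top{\rm Im}(k_{11})\bigr).
\]
Cancelling ${\rm Re}(\Omega_-)$ therefore needs ${\rm Im}(k_{11})\neq 0$; when $\bar C_-$ is purely imaginary it analogously needs ${\rm Re}(k_{11})\neq 0$. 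This is why the paper's example uses genuinely complex couplings such as $k_{11}=[1,\ 1+\imath]$ and $k_{21}=[1+\imath,\ 1+\imath]$.

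Two smaller slips:
\begin{itemize}
\item $M_-$ is anti-Hermitian; it is $\imath M_-$ that is Hermitian.
\item For real orthogonal $S_G$ with $S_{22}=0$ one has $S_{12}^\top S_{12}=I$ and $S_{21}S_{21}^\top=I$. So the branch $S_{12}S_{21}=0$ is impossible, and only $S_{11}=0$ remains.
\end{itemize}
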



\begin{remark}
The selection of coupling strength parameters is primarily based on eliminating the real parts in the inherent system Hamiltonian $\Omega$, with the additional requirement that $\bar{C}=\Delta(\bar{C}_-,\bar{C}_+)$ be either real or purely imaginary.
\end{remark}

The following numerical example validates Theorem \ref{thm:BAE}.

\begin{example}
Assume that the original system $\boldsymbol{G}$ is with the following system Hamiltonian
\begin{equation}
\Omega_-=\left[\begin{array}{cc}
2 & 3+2\imath \\
3-2\imath & 4
\end{array}\right], \ \Omega_+=\left[\begin{array}{cc}
2 & 3-\imath \\ 
3-\imath & 5
\end{array}\right].    
\end{equation}    
Clearly, the system Hamiltonian $\Omega$ does not satisfy the sufficient condition given in Proposition \ref{prop:BAE}. The coupling strengths of the upper $m_1$ channels in Fig. \ref{fig:system} are set as follows
\begin{equation}\begin{aligned}
k_{11}=\left[\begin{array}{cc}
1 & 1+\imath 
\end{array}\right], \ \ k_{12}=\left[\begin{array}{cc}
1 & 2-\imath 
\end{array}\right],
\end{aligned}\end{equation}
while the coupling strengths of the lower $m_2$ channels
\begin{equation}\begin{aligned}
k_{21}=\left[\begin{array}{cc}
1+\imath & 1+\imath 
\end{array}\right], \ \ k_{22}=\left[\begin{array}{cc}
1+\imath & 2+2\imath 
\end{array}\right].
\end{aligned}\end{equation}
Set the scattering matrix $S=-\imath I$, then the reduced Hamiltonian $\bar{\Omega}$ of the coherent feedback network can be calculated as
\begin{equation}
\bar{\Omega}_-=\left[\begin{array}{cc}
0 & -\imath \\
\imath & 0
\end{array}\right], \ \bar{\Omega}_+=\left[\begin{array}{cc}
0 & \imath \\ 
\imath & 3\imath
\end{array}\right],    
\end{equation}
which is purely imaginary. Moreover, the reduced coupling 
\begin{equation}
\left[\begin{array}{cc}
\bar{C}_- & \bar{C}_+ 
\end{array}\right]=\left[\begin{array}{cccc}
1 & 2 & 1 & 1
\end{array}\right]    
\end{equation}
By Lemma \ref{lem:BAE2}, bilateral quantum BAE measurements of $\mbf{q}_{\rm out}$ with respect to $\mbf{q}_{\rm in}$ and $\mbf{p}_{\rm out}$ with respect to $\mbf{p}_{\rm in}$ in Fig. \ref{fig:system} are realized.
\end{example}

\section{QND interaction}\label{QND INT}

A QND interaction enables the repeated measurement of a quantum observable without perturbing its value or its eigenstate distribution. This is achieved by designing the system–probe coupling such that the measurement operator commutes with the system Hamiltonian. For example, consider a set of mutually commuting \emph{self-adjoint}  coupling operators $\mbf{L} = [\mbf{L}_1, \ldots , \mbf{L}_m]^\top$, which can be used as measurement operators for a continuous-time projection measurement process. $[\mbf{L},\mbf{H}]=0$ is assumed in the so-called QND interaction \cite[Sec. 4.2.4]{NY17}, and thus the time evolution of the measurement operator $\mbf{L}$ can be calculated by Eq. \eqref{dX} ($S=I$)
\begin{equation} \label{eq:L_sept16}
d\mbf{L}_j= (-\imath [\mbf{L}_j, \mbf{H}]+\frac{1}{2}([\mbf{L}^\dagger,\mbf{L}_j]\mbf{L})dt+
[\mbf{L}^\dagger,\mbf{L}_j]d\mbf{B}_{\rm in}=0,
\end{equation}
which means that the observable $\mbf{L}_j$ remains as its initial value and is independent of the measurement process.  Moreover the transfer matrix is the identity matrix, therefore quantum BAE measurements are realized.

\bmrk
According to Eq. \eqref{eq:L_sept16}, $\mbf{L}$ is set of mutually commuting observables, then they are QND variables. Moreover, as the transfer function is an identity matrix, therefore, quantum BAE measurements are realized.  Thus, if $\mbf{L}$ is a set of mutually commuting observables and $[\mbf{L},\mbf{H}]=0$, then QND and BAE are realized simultaneously. 
\emrk

Belavkin quantum filtering equation, which is a classical stochastic differential equation, is given by
\begin{equation}\label{dec19_1}\begin{aligned}
 d\pi_t(\mbf{L}_j)
=  \pi_t(\mathcal{L}(\mbf{L}_j))dt + [\pi_t(\mbf{L}_j\mbf{L}^\top+\mbf{L}^\dagger \mbf{L}_j) -\pi_t(\mbf{L}^\top+\mbf{L}^\dagger)\pi_t(\mbf{L}_j)]d\nu(t),
\end{aligned}\end{equation}
where $\nu(t)$ is Wiener process \cite[Sec. 6.1]{BvHJ07} and the superoperator 
\begin{equation}
\mathcal{L}(\mbf{L}_j)=-\imath[\mbf{L}_j,\mbf{H}]+\frac{1}{2}\mbf{L}^\dagger[\mbf{L}_j,\mbf{L}]+\frac{1}{2}[\mbf{L}^\dagger,\mbf{L}_j]\mbf{L} =0.    
\end{equation}
Here, we assume the initial conditional state has the finite first and second moments, $\pi_0(\mbf{L}_j) = \mathrm{Tr}(\rho_{\rm S}(0)\mbf{L}_j)$, where $\rho_{\rm S}(0)$ is the initial system density state. Thus, we have $d\pi_t(\mbf{L}_j)
=  [\pi_t(\mbf{L}_j\mbf{L}^\top+\mbf{L}^\dag \mbf{L}_j) -\pi_t(\mbf{L}_j)\pi_t(\mbf{L}^\top+\mbf{L}^\dag)]d\nu(t)$. Taking expectation with respect to the Wiener process $\nu(t)$, yields that  $\mathbb{E}[ d\pi_t(\mbf{L}_j)]=0$.
As a result,
\begin{equation}\label{eq:sept18_QND}
\mathbb{E}[\pi_t(\mbf{L}_j)] = \mathbb{E}[\pi_0(\mbf{L}_j)]  = {\rm Tr}(\rho_S(0)\mbf{L}_j).    
\end{equation}
Moreover, $\mathbb{E}[\pi_t(\mbf{L}_j^2)] = \mathbb{E}[\pi_0(\mbf{L}_j^2)]  = {\rm Tr}(\rho_S(0)\mbf{L}_j^2)$.

\bmrk
According to Eq. \eqref{eq:sept18_QND}, under the QND interaction, measurements have no back action on the estimation of $\mbf{L}_j$. 
\emrk

In the following SISO case, we go to the Schr{\"o}dinger picture.

Apply the spectral decomposition to the observable $\mbf{L}$, which can be written as
\begin{equation}
\mbf{L}=\sum_{j=1}^k l_jP_{l_j},    
\end{equation}
where $l_j$, $j=1,2,\ldots,k$, are the eigenvalues of $\mbf{L}$, $P_{l_j}$ denotes the corresponding orthogonal projection operator, and resolves the identity $\sum_{j=1}^k P_{l_j}=I$.  Define  the probability
\begin{equation}\label{Jan12-1}
P_j(t)={\rm Tr}[\rho_c(t) P_{l_j}].
\end{equation}
Differentiate both sides of  \eqref{Jan12-1}, we have
\begin{equation}\label{Jan12-3}
dP_j(t)={\rm Tr}[d\rho_c(t)P_{l_j}].
\end{equation}

Assuming the system Hamiltonian $\mbf{H}=0$ and inserting \eqref{Jan12-2} into \eqref{Jan12-3}, implies that
\begin{equation}\label{Jan12-4}
dP_j(t)=2\left(l_j-\sum_{j^\prime=1}^k l_{j^\prime}P_{j^\prime}(t)\right)P_{j}d\nu(t).
\end{equation}
Take expectation to \eqref{Jan12-4}, we have $d\mathbb{E}(P_j(t))/dt\equiv 0$,  which indicates that $P_j(t)$ is a martingale and thus
\begin{equation}
\mathbb{E}(P_j(t))\equiv \mathbb{E} (P_j(0))={\rm Tr}[\rho_c(0)P_{l_j}].    
\end{equation}
Consequently, the conditioned system density operator $\rho_c(t)$ after measurement converges to a projection $P_{l_j}$ with {\it constant} probability $P_j(0)$. If the initial density operator $\rho(0)$ is exactly one of eigenspaces of $L$, then $\rho(0)$ is a steady state and remains unchanged during the measurement process. Similar discussions on the QND interaction measurement in the Schr{\"o}dinger picture can be found in \cite{TC2014}, \cite[Thm. 5.1]{LAM19}, \cite[Lemma 1]{CSR+20}.

\bmrk
Let
\beq
\mbf{H}  = \sum_j h_j P_{l_j}, \ \ \rho_c(t) = 
\sum_j \alpha_j(t) P_{l_j}.
\eeq
Then 
\beq
[\mbf{H},\rho_c(t)] = \sum_j h_j \alpha_j(t)P_{l_j}.
\eeq
Thus, in general, even $[\mbf{L},\mbf{H}]=0$, $[\mbf{H},\rho_c(t)] $ may not be zero, thus Eq. \eqref{Jan12-4} may not hold. For the passive case, it seems OK as you can rotate away $\mbf{H}$. But in general it is unclear.
\emrk

\subsection{The SISO case}\label{siso}

In this subsection, we firstly investigate a more general case, $[\mbf{L}
+\mbf{L}^\ast,\mbf{H}]=0$ or $[\mbf{L}
-\mbf{L}^\ast,\mbf{H}]=0$ in the single-input-single-output (SISO) quantum systems, which can be used to achieve the BAE measurements.

Since the input channel $m=1$, it can be easily verified that $C_-C_+^\top=C_+C_-^\top$. In this case, $[\mbf{L}+\mbf{L}^\ast,\mbf{H}]=0$ yields 
\begin{equation}
d(\mbf{L}+\mbf{L}^\ast)=\frac{1}{2}\left(\mbf{L}^\ast[\mbf{L}^\ast,\mbf{L}]+[\mbf{L}^\ast,\mbf{L}]\mbf{L}\right)dt+d\mbf{B}_{\mathrm{in}}^\ast[\mbf{L}^\ast,\mbf{L}]+[\mbf{L}^\ast,\mbf{L}]d\mbf{B}_{\mathrm{in}}.
\end{equation}
As $\mbf{L}=C_-\mbf{a}+C_+\mbf{a}^\#$, and 
\begin{equation}
[C_-^\#\mbf{a}^\#,C_-\mbf{a}]=-\sum_j |C_{-,j}|^2, ~~~~~ [C_+^\#\mbf{a},C_+\mbf{a}^\#]=\sum_k |C_{+,k}|^2,
\end{equation}
we have
\begin{equation} \label{eq:sep12_1}
d\left(\mbf{L}+\mbf{L}^\ast\right)=-\frac{g}{2}\left(\mbf{L}+\mbf{L}^\ast\right)dt-\sqrt{2}g d\mbf{Q}_{\mathrm{in}},
\end{equation}
where 
\beq\label{sepg}
g\triangleq\sum_{j=1}^n (|C_{-,j}|^2-|C_{+,j}|^2), ~~ \mbf{Q}_{\mathrm{in}}=\frac{1}{\sqrt{2}}\left(\mbf{B}_{\mathrm{in}}+\mbf{B}^\ast_{\mathrm{in}}\right),
\eeq
and
\beq
(\mbf{L+L^\ast})(t) = e^{-gt/2}(\mbf{L+L^\ast})(0)-\sqrt{2}g \int_0^t e^{-g(t-\tau)/2}d\mbf{Q}_{\mathrm{in}}(\tau),
\eeq
\beq\label{sep11-1}
\mbf{q}_{\rm out}(t) =\frac{1}{\sqrt{2}}(\mbf{L+L^\ast})(t) + \mbf{q}_{\mathrm{in}}(t).
\eeq
By Eq. \eqref{eq:sep12_1},
\beq \label{eq:sept16_q_out}
\mbf{q}_{\rm out}[s] = [1-g (s+\frac{g}{2})^{-1}] \mbf{q}_{\mathrm{in}}[s] = \frac{s-g/2}{s+g/2}\mbf{q}_{\mathrm{in}}[s]. 
\eeq
Thus, the BAE measurement of $\mbf{q}_{\rm out}$ with respect to $\mbf{p}_{\rm in}$ is realized.

On the other hand, $[\mbf{L}-\mbf{L}^\ast,\mbf{H}]=0$ implies that
\begin{equation}\label{Oct16-1}\begin{aligned}
d(\mbf{L}-\mbf{L}^\ast)=-\frac{g}{2}(\mbf{L}-\mbf{L}^\ast)dt-\imath\sqrt{2}g d\mbf{P}_{\mathrm{in}},
\end{aligned}\end{equation}
where $\mbf{P}_{\mathrm{in}}=\frac{-\imath}{\sqrt{2}}\left(\mbf{B}_{\mathrm{in}}-\mbf{B}^\ast_{\mathrm{in}}\right)$. Solving Eq. \eqref{Oct16-1}, yields that 
\beq
(\mbf{L-L^\ast})(t) = e^{-gt/2}(\mbf{L-L^\ast})(0)-\imath\sqrt{2}g \int_0^t e^{-g(t-\tau)/2} d\mbf{P}_{\mathrm{in}}(\tau).
\eeq
We have
\beq\label{sep11-2}
\mbf{p}_{\rm out}(t) =\frac{-\imath}{\sqrt{2}}(\mbf{L-L^\ast})(t) + \mbf{p}_{\mathrm{in}}(t), 
\eeq
and thus
\beq
\mbf{p}_{\rm out}[s] = [1-g (s+\frac{g}{2})^{-1}] \mbf{p}_{\mathrm{in}}[s] = \frac{s-g/2}{s+g/2}\mbf{p}_{\mathrm{in}}[s], 
\eeq
which realizes the BAE measurement of $\mbf{p}_{\rm out}$ with respect to $\mbf{q}_{\rm in}$. 

Consequently, quantum BAE measurement in the SISO case can be realized if $[\mbf{L}+\mbf{L}^\ast,\mbf{H}]=0$ or $[\imath(\mbf{L}-\mbf{L}^\ast),\mbf{H}]=0$.

\begin{remark}
If $\sum_j |C_{-,j}|^2=\sum_k |C_{+,k}|^2$, then $g=0$ in \eqref{sepg}, which yields that $d(\mbf{L}+\mbf{L}^\ast)=0$ and the coupling operator $(\mbf{L}+\mbf{L}^\ast)(t)=(\mbf{L}+\mbf{L}^\ast)(0)=\mbf{L}+\mbf{L}^\ast$ can be detected via the input-output relation \eqref{sep11-1}. Thus, this kind of interaction does not change the physical property of the coupling operator $\mbf{L}+\mbf{L}^\ast$,
which is called a QND interaction in \cite[Sec. 4.2.4]{NY17}. On the other hand, by Eq. \eqref{sep11-1},  $\mbf{L}+\mbf{L}^\ast$ is observable, and thus it is a QND variable. Similar results are also applied to the combined Hermitian coupling operator $\imath(\mbf{L}-\mbf{L}^\ast)(t) \equiv \imath(\mbf{L}-\mbf{L}^\ast)$, in which the QND interaction can be observed by Eq. \eqref{Oct16-1} and the input-output relation \eqref{sep11-2}. Indeed, $\mbf{L}=\mbf{L}^\ast$ assumed in \cite[Sec. 4.2.4]{NY17} is a more special case, which realizes the condition $g=0$ in Eq. \eqref{sepg}. In that special case, $[\mbf{L}+\mbf{L}^\ast,\mbf{H}]=0$ reduces to $[\mbf{L},\mbf{H}]=0$. 
\end{remark}

\bmrk
In the SISO case, let $\mbf{L}$ be an observable and $[\mbf{L},\mbf{H}]=0$, then $\mbf{L}$ is a QND variable and BAE measurement is also realized. If $\mbf{L}$ is not an observable, by Eq. \eqref{eq:sept16_q_out} BAE measurements can still be implemented. However conditions like  $\sum_j |C_{-,j}|^2=\sum_k |C_{+,k}|^2$ should be added to have QND variables.
\emrk

In the following subsections, we focus on the multi-input-multi-output (MIMO) quantum systems and discuss the properties of $[\mbf{L},\mbf{H}]=0$ in the annihilation-creation,  quadrature, Kalman canonical forms, respectively.

\subsection{The MIMO case in the annihilation-creation form}

In the annihilation-creation operator form, $\mbf{L}=\mbf{L}^\#$ is equivalent to  $C_-=C_+^\#$. In particular, if both $C_-$ and $C_+$  are real, then $\mbf{L}=\mbf{L}^\#$ is equivalent to $C_-=C_+$. 

\bmrk
It is obvious that $[\mbf{L},\mbf
{H}]=0$ implies that $[\mbf{L}^\#,\mbf{H}]=0$, and thus $[\mbf{L}\pm\mbf{L}^\#,\mbf{H}]=0$. However, either $[\mbf{L}+\mbf{L}^\#,\mbf{H}]=0$ or  $[\mbf{L}-\mbf{L}^\#,\mbf{H}]=0$ alone does not necessarily lead to $[\mbf{L},\mbf
{H}]=0$.
\emrk

The following lemma will be helpful to derive the equations given in Lemma \ref{Jan29-1}.
\begin{lemma}\cite{ZD22}\label{Jan29-2}
Let $\mbf{X}$, $\mbf{Y}$, and $\mbf{Z}$ be vectors of operators with dimension $l$, $m$, and $n$, respectively. Let $M\in\mathbf{C}^{m\times n}$, and the commutators $[\mbf{a},\mbf{b}]\in\mathbf{C}$ where $\mbf{a}$ and $\mbf{b}$ are arbitrary elements of the vectors $\mbf{X}$, $\mbf{Y}$ and $\mbf{Z}$. Then
\begin{equation}
\left[\mbf{X},\mbf{Y}^\top M \mbf{Z}\right]=\left[\mbf{X},\mbf{Y}^\top\right]M\mbf{Z}+\left[\mbf{X},\mbf{Z}^\top\right]M^\top\mbf{Y}.    
\end{equation} 
\end{lemma}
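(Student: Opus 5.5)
The plan is to prove the identity entrywise, treating the commutator as a derivation in its second argument and using that all elementary commutators are scalars. Fix an index $i\in\{1,\ldots,l\}$ and write the scalar operator
\begin{equation*}
\mbf{Y}^\top M\mbf{Z}=\sum_{j=1}^m\sum_{k=1}^n M_{jk}\,\mbf{Y}_j\mbf{Z}_k .
\end{equation*}
By bilinearity of the commutator, the $i$-th entry of the left-hand side is $\sum_{j,k}M_{jk}[\mbf{X}_i,\mbf{Y}_j\mbf{Z}_k]$.

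Next apply the Leibniz rule $[\mbf{x},\mbf{y}\mbf{z}]=[\mbf{x},\mbf{y}]\mbf{z}+\mbf{y}[\mbf{x},\mbf{z}]$. This gives
\begin{equation*}
[\mbf{X}_i,\mbf{Y}^\top M\mbf{Z}]=\sum_{j,k}M_{jk}[\mbf{X}_i,\mbf{Y}_j]\mbf{Z}_k+\sum_{j,k}M_{jk}\mbf{Y}_j[\mbf{X}_i,\mbf{Z}_k].
\end{equation*}
The first sum is exactly the $i$-th entry of $[\mbf{X},\mbf{Y}^\top]M\mbf{Z}$, since by the column-vector convention in the Notation, $[\mbf{X},\mbf{Y}^\top]$ is the $l\times m$ matrix with entries $[\mbf{X}_i,\mbf{Y}_j]$.

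In the second sum we use the hypothesis that $[\mbf{X}_i,\mbf{Z}_k]\in\mathbf{C}$. Being scalars, these commutators can be moved to the left of $\mbf{Y}_j$, so the sum becomes
\begin{equation*}
\sum_{k,j}[\mbf{X}_i,\mbf{Z}_k]\,(M^\top)_{kj}\,\mbf{Y}_j,
\end{equation*}
which is the $i$-th entry of $[\mbf{X},\mbf{Z}^\top]M^\top\mbf{Y}$, with $[\mbf{X},\mbf{Z}^\top]\in\mathbf{C}^{l\times n}$. Collecting the $l$ entries gives the claimed vector identity.

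The only delicate point is this reordering step. It is where the assumption that commutators are $\mathbf{C}$-valued is essential, and it is also why $M^\top$ and $\mbf{Y}$, rather than $M$ and $\mbf{Z}$, appear in the second term. One must also check that the index conventions for $[\mbf{X},\mbf{Y}^\top]$ match the paper's definition (entries $[\mbf{X}_j,\mbf{Y}_k]$), so that no transposes are misplaced. Everything else is routine bookkeeping.
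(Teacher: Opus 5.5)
Your proof is correct and complete. The paper gives no proof of its own (it imports the lemma from \cite{ZD22}), and your entrywise Leibniz-rule expansion is the standard derivation, with the index bookkeeping for $[\mbf{X},\mbf{Y}^\top]$ and $M^\top$ matching the paper's conventions; as you note, the scalar-commutator hypothesis is needed only to move $[\mbf{X}_i,\mbf{Z}_k]$ past $\mbf{Y}_j$ in the second term, while the first term holds for arbitrary operators.
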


\begin{lemma}\label{Jan29-1}
$[\mbf{L},\mbf{H}]=0$ if and only if
\begin{equation}\label{Jan29-3}\begin{aligned}
C_-\Omega_-=C_+\Omega_+^\dagger, \\
C_-\Omega_+=C_+\Omega_-^\top,
\end{aligned}\end{equation}
or equivalently, 
\begin{equation}\label{COmega}
\mathcal{C}\Omega=2\Delta(C_-,0)\Omega.
\end{equation}
\end{lemma}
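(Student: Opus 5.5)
The plan is to compute the commutator $[\mbf{L},\mbf{H}]$ directly, using Lemma \ref{Jan29-2}, and to show that it equals a constant row-block matrix times $\breve{\mbf{a}}$. Since $\mbf{a}_1,\dots,\mbf{a}_n,\mbf{a}_1^\ast,\dots,\mbf{a}_n^\ast$ are linearly independent operators, such an expression vanishes if and only if its coefficient matrix vanishes. I will then read off the block equations \eqref{Jan29-3} and repackage them as \eqref{COmega}.

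\textbf{Step 1: elementary commutators.} Write $\mbf{H}=\breve{\mbf{a}}^\dagger(\tfrac12\Omega)\breve{\mbf{a}}$ and apply Lemma \ref{Jan29-2} with $\mbf{X}=\mbf{L}$, $\mbf{Y}=\breve{\mbf{a}}^\#$, $M=\tfrac12\Omega$ and $\mbf{Z}=\breve{\mbf{a}}$. All pairwise commutators are scalars, so the lemma applies. From $[\mbf{a}_j,\mbf{a}_k^\ast]=\delta_{jk}$ we get
\[
[\breve{\mbf{a}},\breve{\mbf{a}}^\dagger]=J_n, \qquad [\breve{\mbf{a}},\breve{\mbf{a}}^\top]=\mathbb{J}_n .
\]
Since $\mbf{L}=[\,C_-\ \ C_+\,]\breve{\mbf{a}}$, this gives
\[
[\mbf{L},\breve{\mbf{a}}^\dagger]=[\,C_-\ \ -C_+\,], \qquad [\mbf{L},\breve{\mbf{a}}^\top]=[\,-C_+\ \ C_-\,].
\]

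\textbf{Step 2: simplify the second term.} Let $\Sigma=\bigl[\begin{smallmatrix}0&I\\ I&0\end{smallmatrix}\bigr]$. Then $\breve{\mbf{a}}^\#=\Sigma\breve{\mbf{a}}$ and $[\,-C_+\ \ C_-\,]=[\,C_-\ \ -C_+\,]\Sigma$. Because $\Omega$ is Hermitian, $\Omega^\top=\Omega^\#$. Because $\Omega$ is doubled-up, $\Sigma\Omega^\#\Sigma=\Omega$. The second term of Lemma \ref{Jan29-2} therefore becomes
\[
[\,C_-\ \ -C_+\,]\Sigma\,\tfrac12\Omega^\#\,\Sigma\breve{\mbf{a}}=\tfrac12[\,C_-\ \ -C_+\,]\Omega\breve{\mbf{a}}.
\]
This coincides with the first term, so
\[
[\mbf{L},\mbf{H}]=[\,C_-\ \ -C_+\,]\,\Omega\,\breve{\mbf{a}}.
\]
Getting this symmetrization right is the step I expect to require the most care. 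It depends on the Hermiticity of $\Omega$ (namely $\Omega_-=\Omega_-^\dagger$ and $\Omega_+=\Omega_+^\top$) and on the placement of transposes in the paper's commutator conventions.

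\textbf{Step 3: read off the conditions.} By linear independence, $[\mbf{L},\mbf{H}]=0$ if and only if $[\,C_-\ \ -C_+\,]\Omega=0$. Expanding $\Omega=\bigl[\begin{smallmatrix}\Omega_-&\Omega_+\\ \Omega_+^\#&\Omega_-^\#\end{smallmatrix}\bigr]$ gives
\[
C_-\Omega_-=C_+\Omega_+^\#, \qquad C_-\Omega_+=C_+\Omega_-^\#.
\]
Using $\Omega_+^\#=\Omega_+^\dagger$ (since $\Omega_+$ is symmetric) and $\Omega_-^\#=\Omega_-^\top$ (since $\Omega_-$ is Hermitian), these are exactly \eqref{Jan29-3}.

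\textbf{Step 4: the equivalent form.} Observe that
\[
\mathcal{C}-2\Delta(C_-,0)=\Delta(-C_-,C_+),
\]
whose first block row is $-[\,C_-\ \ -C_+\,]$ and whose second block row is the entrywise conjugate of the first, up to the block swap. Hence $\Delta(-C_-,C_+)\Omega=0$ consists of the first-row equations together with their complex conjugates. Since $\Omega$ is doubled-up, the conjugated equations are equivalent to the first-row equations. This proves that \eqref{COmega} is equivalent to \eqref{Jan29-3}.
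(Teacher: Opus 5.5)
Your proposal is correct and follows the paper's route. You expand $[\mathbf{L},\mathbf{H}]$ with Lemma \ref{Jan29-2} to obtain $[\,C_-\ \ -C_+\,]\Omega\breve{\mathbf{a}}=(C_-\Omega_- - C_+\Omega_+^\dagger)\mathbf{a}+(C_-\Omega_+ - C_+\Omega_-^\top)\mathbf{a}^\#$ and read off the coefficients. Beyond that you fill in steps the paper leaves implicit: combining the two terms of the lemma via the block-swap identity $\Sigma\Omega^\#\Sigma=\Omega$, justifying the coefficient extraction by linear independence, and checking that the second block row of $\Delta(-C_-,C_+)\Omega=0$ adds nothing new.
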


\emph{Proof.}
Inserting $\mbf{L}=C_-\mbf{a}+C_+\mbf{a}^\#$ and $\mbf{H}=\frac{1}{2}\breve{\mbf{a}}^\dagger\Omega\breve{\mbf{a}}$ into the commutator $\left[\mbf{L},\mbf{H}\right]$, by Lemma \ref{Jan29-2} we have
\begin{equation}\la{eq:jun29_LH}
\begin{aligned}
[\mbf{L},\mbf{H}]&=(C_-\Omega_--C_+\Omega_+^\dagger)\mbf{a}+(C_-\Omega_+-C_+\Omega_-^\top)\mbf{a}^\#, \\
[\mbf{L}^\#,\mbf{H}]&=(C_+^\#\Omega_--C_-^\#\Omega_+^\dagger)\mbf{a}+(C_+^\#\Omega_+-C_-^\#\Omega_-^\top)\mbf{a}^\#.
\end{aligned}\end{equation}
Thus, $\left[\mbf{L},\mbf{H}\right]=0$ if and only if \eqref{Jan29-3} holds, and is equivalent to \eqref{COmega} due to the Hermitian property of $\Omega$.




In what follows, we discuss four special cases, which can be used to realize quantum BAE measurements.

\begin{enumerate}

\item $C_+=\mathbf{0}$. In this case,  the conditions in Eq. \eqref{Jan29-1} reduce to $C_-\Omega_-=\mathbf{0}$ and $C_-\Omega_+=\mathbf{0}$, which implies that $\mathcal{C}\Omega=\mathbf{0}$.

As in \cite{GZ15}, for the general linear quantum system, $\Sigma[s]$ is defined as
\begin{equation}
\Sigma[s]=\frac{1}{2}\mathcal{C}(sI+\imath J_n\Omega)^{-1}\mathcal{C}^\flat .
\end{equation}
 Moreover, $\Sigma[s]$ can be further calculated as
\begin{equation}
\Sigma[s]=\frac{1}{2s}\mathcal{C}\sum_{n=0}^\infty(\frac{-\imath}{s})^n(J_n\Omega)^n\mathcal{C}^\flat,
\end{equation}
where $(I+A)^{-1}=\sum_{n=0}^\infty(-1)^n A^n$ has been used in the derivation. We have
\begin{equation}\begin{aligned}
\mathcal{C}J_n\Omega=\left[\begin{array}{cc}C_-\Omega_- & C_-\Omega_+ \\
C_-^\#\Omega_+^\# & -C_-^\#\Omega_-^\#
\end{array}\right]=\mathbf{0},
\end{aligned}\end{equation}
and $\Sigma[s]=\frac{1}{2s}\mathcal{C}\mathcal{C}^\flat$. Consequently, the transfer function reduces to 
\begin{equation}\begin{aligned}
\mathbb{G}[s]=\left[\begin{array}{cc} 
(sI-\frac{1}{2}C_-C_-^\dagger)(sI+\frac{1}{2}C_-C_-^\dagger)^{-1} & 0 \\
0 & (sI-\frac{1}{2}C_-^\#C_-^\top)(sI+\frac{1}{2}C_-^\#C_-^\top)^{-1} 
\end{array}\right],
\end{aligned}\end{equation}
which is block diagonal and BAE measurements are realized. 

\item $C_-=\mathbf{0}$, which means that $\mathcal{C}\Omega=\mathbf{0}$ by Eq. \eqref{COmega}. In this case, we have $C_+\Omega_+^\dagger=0$ and $C_+\Omega_-^\top=0$. Similarly, it can be also verified that $\mathcal{C}J_n\Omega=\mathbf{0}$. And the block diagonal transfer function  
\begin{equation}\begin{aligned}
\mathbb{G}[s]=\left[\begin{array}{cc} 
(sI+\frac{1}{2}C_+C_+^\dagger)(sI-\frac{1}{2}C_+C_+^\dagger)^{-1} & 0 \\
0 & (sI+\frac{1}{2}C_+^\#C_+^\top)(sI-\frac{1}{2}C_+^\#C_+^\top)^{-1} 
\end{array}\right],
\end{aligned}\end{equation}
which realizes BAE measurements.

\item $\Omega_+=\mathbf{0}$. In this case, $C_-\Omega_-=\mathbf{0}$ and $C_+\Omega_-^\top=\mathbf{0}$, which implies that $C_+\Omega_-^\#=0$ and $\mathcal{C}\Omega=\mathbf{0}$. It can be calculated that
$\mathcal{C}J_n\Omega=\mathbf{0}$ and $\Sigma[s]=\frac{1}{2s}\mathcal{C}\mathcal{C}^\flat$. However, the transfer function $\mathbb{G}[s]=(sI-\frac{1}{2}\mathcal{C}\mathcal{C}^\flat)(sI+\frac{1}{2}\mathcal{C}\mathcal{C}^\flat)^{-1}$, which is not block diagonal. Let $C_-C_+^\top=C_+C_-^\top$, i.e., $C_-C_+^\top$ is symmetric, then the transfer function 
\begin{equation}\begin{aligned}
\mathbb{G}[s]=\mathrm{diag}\bigg\{&(sI-\frac{1}{2}(C_-C_-^\dagger-C_+C_+^\dagger))(sI+\frac{1}{2}(C_-C_-^\dagger-C_+C_+^\dagger))^{-1}, \\
&(sI+\frac{1}{2}(C_+^\#C_+^\top-C_-^\#C_-^\top))(sI-\frac{1}{2}(C_+^\#C_+^\top-C_-^\#C_-^\top))^{-1}\bigg\},
\end{aligned}\end{equation}
is block diagonal and BAE measurements are realized.

\item $\Omega_-=\mathbf{0}$, which means $\mathcal{C}\Omega=\mathbf{0}$ by Eq. \eqref{COmega}. Similar to the case of $\Omega_+=\mathbf{0}$, the condition of $C_-C_+^\top$ is symmetric implies that the transfer function $\mathbb{G}[s]$ is block diagonal and BAE measurements are feasible.

\end{enumerate}

\bmrk
The condition $C_-C_+^\top$ being symmetric holds if $C_-=\pm C_+$ ($\mbf{q}$ or $\mbf{p}$ coupling)  or the quantum system is with the SISO case (see Subsection \ref{siso} for details).
\emrk

\bmrk
In the above four special cases with $[\mbf{L},\mbf{H}]=0$, we all have $\mathcal{C}\Omega=\mathbf{0}$, which means that the transfer function $\mathbb{G}[s]$ does not depend on $\Omega$. This may not be true in general even when $[\mbf{L},\mbf{H}]=0$.
\emrk

\subsection{The MIMO case in the quadrature form}

In this section, we investigate linear quantum systems in the quadrature form, whose properties can be proved to be consistent with the related results in the annihilation-creation form given before.

Denote 
\begin{equation}
\Lambda=\left[\begin{array}{cc}C_- & C_+\end{array}\right]V_n^\dagger
=\left[\begin{array}{cc}\Lambda_q & \Lambda_p\end{array}\right].
\end{equation}

In the quadrature representation, the coupling operator $\mbf{L}$ and Hamiltonian $\mbf{H}$ can be described by
\begin{equation}
\mbf{L}=\Lambda \mbf{x}, ~~~~ \mbf{H}=\frac{1}{2}\mbf{x}^\top \mathbb{H} \mbf{x}.
\end{equation}

The condition $[\mbf{L},\mbf{H}]=0$ is equivalent to 
\begin{equation}\label{LH=0real}
\Lambda \mathbb{J}_n \mathbb{H} = 0,
\end{equation}
where Lemma \ref{Jan29-2} has been used in the derivation.

Similarly, $[\mbf{L}^\#,\mbf{H}]=0$ is equivalent to 
\begin{equation}\label{LH=0real2}
\Lambda^\# \mathbb{J}_n \mathbb{H} = 0,
\end{equation}

In  what follows we look at four special cases.

\begin{enumerate}

\item $C_+=0$, then we have $\Lambda_q=\frac{1}{\sqrt{2}}C_-$ and $\Lambda_p=\frac{1}{\sqrt{2}}\imath C_-$. In this case
\begin{equation}
\mathbb{C}=\frac{1}{2}\left[\begin{array}{cc}C_-+C_-^\# & -C_-+C_-^\# \\
C_--C_-^\# & C_-+C_-^\#\end{array}\right], ~~ 
\mathbb{C}^\sharp=-\frac{1}{2}\left[\begin{array}{cc}C_-^\dagger+C_-^\top & -C_-^\dagger+C_-^\top \\
C_-^\dagger-C_-^\top & C_-^\dagger+C_-^\top\end{array}\right].
\end{equation}
Since 
\begin{equation} 
\mathbb{J}_n\mathbb{H}=\left[\begin{array}{cc}
\mathrm{Im}(\Omega_-+\Omega_+) & \mathrm{Re}(\Omega_--\Omega_+) \\
-\mathrm{Re}(\Omega_-+\Omega_+) & \mathrm{Im}(\Omega_--\Omega_+) \\
\end{array}\right],
\end{equation}
we have
\begin{equation}
\Lambda\mathbb{J}_n\mathbb{H}=\left[\begin{array}{cc}\frac{1}{\sqrt{2}}C_- & \frac{1}{\sqrt{2}}\imath C_-\end{array}\right]\mathbb{J}_n\mathbb{H}=\mbf{0},
\end{equation}
which yields
\begin{equation}\label{conclu1}
C_-\Omega_-=0, ~~~~ C_-\Omega_+=0.
\end{equation}
It can be observed that \eqref{conclu1} is as the same as the results given before, and in this case $\mathbb{G}[s]$ is block diagonal, which means that BAE measurements are realized.

\item $C_-=0$, then we have $\Lambda_q=\frac{1}{\sqrt{2}}C_+$ and $\Lambda_p=-\frac{1}{\sqrt{2}}\imath C_+$. In this case
\begin{equation}
\mathbb{C}=\frac{1}{2}\left[\begin{array}{cc}C_++C_+^\# & C_+-C_+^\# \\
C_+-C_+^\# & -C_+-C_+^\#\end{array}\right], ~~ 
\mathbb{C}^\sharp=-\frac{1}{2}\left[\begin{array}{cc}-C_+^\dagger-C_+^\top & C_+^\dagger-C_+^\top \\
C_+^\dagger-C_+^\top & C_+^\dagger+C_+^\top\end{array}\right].
\end{equation}
Since 
\begin{equation} 
\mathbb{J}_n\mathbb{H}=\left[\begin{array}{cc}
\mathrm{Im}(\Omega_-+\Omega_+) & \mathrm{Re}(\Omega_--\Omega_+) \\
-\mathrm{Re}(\Omega_-+\Omega_+) & \mathrm{Im}(\Omega_--\Omega_+) \\
\end{array}\right],
\end{equation}
we have
\begin{equation}
\Lambda\mathbb{J}_n\mathbb{H}=\left[\begin{array}{cc}\frac{1}{\sqrt{2}}C_+ & -\frac{1}{\sqrt{2}}\imath C_+\end{array}\right]\mathbb{J}_n\mathbb{H}=\mbf{0},
\end{equation}
which yields
\begin{equation}\label{conclu2}
C_+\Omega_-^\top=0, ~~~~ C_+\Omega_+^\dagger=0.
\end{equation}
It can be observed that \eqref{conclu2} is as the same as the results given before, and in this case $\mathbb{G}[s]$ is block diagonal, which means that BAE measurements are realized.

\item $\Omega_+=0$, then we have
\begin{equation}
\mathbb{J}_n\mathbb{H}=\left[\begin{array}{cc}\mathrm{Im}(\Omega_-) & \mathrm{Re}(\Omega_-) \\
-\mathrm{Re}(\Omega_-) & \mathrm{Im}(\Omega_-)\end{array}\right].
\end{equation}
Since $\Lambda\mathbb{J}_n\mathbb{H}=0$, we have
\begin{equation}\label{conclu3}
C_-\Omega_-=0, ~~~~ C_+\Omega_-^\top=0.
\end{equation}
It can be observed that \eqref{conclu3} is as the same as the results given before. However, in this case $C_-C_+^\top=C_+C_-^\top$ is needed to satisfy the condition of $\mathbb{G}[s]$ is block diagonal, which means that BAE measurements are realized.

\item $\Omega_-=0$, then we have
\begin{equation}
\mathbb{J}_n\mathbb{H}=\left[\begin{array}{cc}\mathrm{Im}(\Omega_+) & -\mathrm{Re}(\Omega_+) \\
-\mathrm{Re}(\Omega_+) & -\mathrm{Im}(\Omega_+)\end{array}\right].
\end{equation}
Since $\Lambda\mathbb{J}_n\mathbb{H}=0$, we have
\begin{equation}\label{conclu4}
C_-\Omega_+=0, ~~~~ C_+\Omega_+^\#=0.
\end{equation}

It can be observed that \eqref{conclu4} is as the same as the results given before. However, in this case $C_-C_+^\top=C_+C_-^\top$ is also needed to satisfy the condition of $\mathbb{G}[s]$ is block diagonal, which means that BAE measurements are realized.

\end{enumerate}

\subsection{The MIMO case in the Kalman canonical form}

In the case of Kalman canonical form, the condition $[\mbf{L},\mbf{H}]=0$ is equivalent to
\begin{equation}
\Gamma\bar{\mathbb{J}}_n\tilde{\mathbb{H}}=0.
\end{equation} 
Thus, by \cite[Lemma 3.2]{ZPL20} we have
\begin{equation}
\Gamma_{co}\bar{\mathbb{J}}_n\tilde{\mathbb{H}}=
\left[\begin{array}{c}
\Gamma \\
\Gamma^\# 
\end{array}\right]\bar{\mathbb{J}}_n\tilde{\mathbb{H}}=0,    
\end{equation}
which yields 
\begin{equation}
C_hA_h^{22}+C_{co}\mathbb{J}_{n_1}A_{12}^\top
+\frac{1}{2}C_{co}B_{co}\mathbb{J}_mB_h^\top=0,
\end{equation}
and
\begin{equation}\label{Jan13-1}
C_{co}A_{co}=\frac{1}{2}C_{co}B_{co}C_{co}.
\end{equation}
The following theorem presents the realization of BAE measurements by
the linear quantum system in the Kalman canonical form.

\begin{equation}
\Gamma_{co}=\left[\begin{array}{cc}
\Gamma_{co,q} & \Gamma_{co,p} \\
\Gamma_{co,q}^\# & \Gamma_{co,p}^\#
\end{array}\right]
\end{equation}
\begin{equation}
C_{co}=V_m\Gamma_{co}=\left[\begin{array}{c}
C_{co,q} \\
C_{co,p}
\end{array}\right]=\sqrt{2}\left[\begin{array}{cc}
{\rm Re}(\Gamma_{co,q}) & {\rm Re}(\Gamma_{co,p}) \\
{\rm Im}(\Gamma_{co,q}) & {\rm Im}(\Gamma_{co,p})
\end{array}\right]    
\end{equation}

\begin{theorem}
The linear quantum system in the Kalman canonical form under the condition of $[\mbf{L},\mbf{H}]=0$ realizes the BAE measurements of $\mbf{q}_{\rm out}$ with respect to $\mbf{p}_{\rm in}$, i.e., the transfer function $\Xi_{\mbf{p}_{\rm in}\rightarrow \mbf{q}_{\rm out}}(s) \equiv 0$ 
if and only if
\begin{equation}
C_{co,q}B_{co,p}=0,    
\end{equation}
or equivalently,
\begin{equation}
{\rm Re}(\Gamma_{co,q}) {\rm Re}(\Gamma_{co,p}^\top) =
{\rm Re}(\Gamma_{co,p})  {\rm Re}(\Gamma_{co,q}^\top).  
\end{equation}
Meanwhile, the linear quantum system in the Kalman canonical form under the condition of $[\mbf{L},\mbf{H}]=0$ realizes the BAE measurements of $\mbf{p}_{\rm out}$ with respect to $\mbf{q}_{\rm in}$, i.e., the transfer function $\Xi_{\mbf{q}_{\rm in}\rightarrow \mbf{p}_{\rm out}}(s) \equiv 0$ if and only if 
\begin{equation}
C_{co,p}B_{co,q}=0,    
\end{equation}
or equivalently,
\begin{equation}
{\rm Im}(\Gamma_{co,q}) {\rm Im}(\Gamma_{co,p}^\top) =
{\rm Im}(\Gamma_{co,p})  {\rm Im}(\Gamma_{co,q}^\top).  
\end{equation}
Furthermore, the realization of both the two cases of BAE measurements imply that ${\rm Re}(\Gamma_{co,q}\Gamma_{co,p}^\top)$ is symmetric.
\end{theorem}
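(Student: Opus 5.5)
The plan is to compute the transfer function of the Kalman canonical form explicitly and use \eqref{Jan13-1} to collapse the resolvent. Only the controllable–observable subsystem $(A_{co},B_{co},C_{co},D)$ contributes to the input–output map, so
\[
\Xi(s)=D+C_{co}(sI-A_{co})^{-1}B_{co}.
\]
With $S=I$ the feedthrough $D$ is block diagonal in the $(\mbf{q},\mbf{p})$ splitting. Hence the off-diagonal blocks $\Xi_{\mbf{p}_{\rm in}\rightarrow \mbf{q}_{\rm out}}$ and $\Xi_{\mbf{q}_{\rm in}\rightarrow \mbf{p}_{\rm out}}$ come entirely from $C_{co}(sI-A_{co})^{-1}B_{co}$.

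The key step is to use the condition $C_{co}A_{co}=\tfrac12 C_{co}B_{co}C_{co}$, which follows from $[\mbf{L},\mbf{H}]=0$. Set $M\triangleq \tfrac12 C_{co}B_{co}$, so that $C_{co}A_{co}=MC_{co}$. By induction, $C_{co}A_{co}^k=M^kC_{co}$, and the Neumann series (valid for large $|s|$, then extended by analyticity) gives
\[
C_{co}(sI-A_{co})^{-1}=(sI-M)^{-1}C_{co}.
\]
Therefore
\[
\Xi(s)=D+(sI-M)^{-1}\,2M=D+2(sI-M)^{-1}M .
\]
So the whole transfer function is a rational function of the single $m\times m$-block matrix $M$. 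Write $M$ in quadrature blocks as
\[
M=\tfrac12\begin{bmatrix} C_{co,q}B_{co,q} & C_{co,q}B_{co,p}\\ C_{co,p}B_{co,q} & C_{co,p}B_{co,p}\end{bmatrix}.
\]
If the $(1,2)$ block $C_{co,q}B_{co,p}$ vanishes, then $M$ is block lower triangular. Then $(sI-M)^{-1}$ and $(sI-M)^{-1}M$ are also block lower triangular, so $\Xi_{\mbf{p}_{\rm in}\rightarrow\mbf{q}_{\rm out}}\equiv0$. For the converse, the large-$s$ expansion $\Xi(s)-D=2\sum_{k\ge1}M^k s^{-k}$ has leading coefficient $2M$ at order $s^{-1}$. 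So $\Xi_{12}\equiv0$ forces $M_{12}=0$. The statement for $\mbf{p}_{\rm out}$ versus $\mbf{q}_{\rm in}$ is handled identically using the $(2,1)$ block, which gives $C_{co,p}B_{co,q}=0$.

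Next I translate the block conditions into conditions on $\Gamma_{co}$. The input matrix of a physically realizable system is $B_{co}=-C_{co}^\sharp$ (up to the unitary $D$, which is $I$ here). Combined with $C_{co}=\sqrt2[\mathrm{Re}\,\Gamma;\mathrm{Im}\,\Gamma]$, this gives $B_{co}=-\mathbb{J}\,C_{co}^\top\mathbb{J}_m$ up to sign conventions. Computing the $(1,2)$ block reduces $C_{co,q}B_{co,p}$ to a multiple of $C_{co,q}\mathbb{J}C_{co,q}^\top$. Expanding with $C_{co,q}=\sqrt2[\mathrm{Re}\,\Gamma_{co,q}\ \ \mathrm{Re}\,\Gamma_{co,p}]$, this block is proportional to
\[
\mathrm{Re}(\Gamma_{co,q})\mathrm{Re}(\Gamma_{co,p})^\top-\mathrm{Re}(\Gamma_{co,p})\mathrm{Re}(\Gamma_{co,q})^\top .
\]
That expression vanishes exactly under the stated symmetry, and the imaginary-part version follows in the same way. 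This bookkeeping of $\sharp$-adjoints and sign conventions is where I expect the main obstacle. I would fix the convention by checking it against the SISO case of Subsection~\ref{siso}.

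Finally, the last claim uses an identity for real matrices, writing $\Gamma_q=\Gamma_{co,q}$ and $\Gamma_p=\Gamma_{co,p}$ for brevity:
\[
\mathrm{Re}(\Gamma_q\Gamma_p^\top)=\mathrm{Re}\Gamma_q\,\mathrm{Re}\Gamma_p^\top-\mathrm{Im}\Gamma_q\,\mathrm{Im}\Gamma_p^\top .
\]
When both symmetry conditions hold, each of the two terms on the right is symmetric. Their difference is therefore symmetric, so $\mathrm{Re}(\Gamma_{co,q}\Gamma_{co,p}^\top)$ is symmetric.
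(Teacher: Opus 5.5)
Your proposal is correct and follows essentially the paper's route. The paper uses \eqref{Jan13-1} to get $C_{co}A_{co}^kB_{co}=2^{-k}(C_{co}B_{co})^{k+1}$, argues via block-triangularity of these Markov parameters, and takes the $k=0$ term for the converse, which is exactly your resolvent identity $C_{co}(sI-A_{co})^{-1}=(sI-M)^{-1}C_{co}$ expanded in powers of $s^{-1}$. Your extra work is also correct: the reduction $C_{co,q}B_{co,p}=C_{co,q}\mathbb{J}_{n_1}C_{co,q}^\top$ via $B_{co}=-C_{co}^\sharp$, and the symmetry argument for ${\rm Re}(\Gamma_{co,q}\Gamma_{co,p}^\top)$, supply steps the paper's proof leaves out.
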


\begin{proof}
Partition the system matrices $C_{co}$ and $B_{co}$ as
\begin{equation}
C_{co}=\left[\begin{array}{c}
C_{co,q} \\
C_{co,p}
\end{array}\right], ~~~~ B_{co}=\left[\begin{array}{cc}
B_{co,q} & B_{co,p} 
\end{array}\right],   
\end{equation}
respectively. Under the condition $[\mbf{L},\mbf{H}]=0$, by \eqref{Jan13-1} we have
\begin{equation}\begin{aligned}
C_{co}A_{co}^kB_{co}=\frac{1}{2^k}(C_{co}B_{co})^{k+1}.
\end{aligned}\end{equation}
Thus, if $C_{co,q}B_{co,p}=0$, then $C_{co}A_{co}^kB_{co}$ is block lower triangular, which implies that $C_{co,q}A_{co}^kB_{co,p}=0$. By \cite[Theorem 4.1]{ZPL20}, the quantum Kalman canonical form realizes the BAE measurements of $\mbf{q}_{\rm out}$ with respect to $\mbf{p}_{\rm in}$. On the other hand, if the BAE measurements of $\mbf{q}_{\rm out}$ with respect to $\mbf{p}_{\rm in}$ is realized, then by \cite[Theorem 4.1]{ZPL20}
\begin{equation}
C_{co,q}A_{co}^kB_{co,p}=0, ~~ k=0,1,\ldots,
\end{equation}
which implies that $C_{co,q}B_{co,p}=0$. The proof of the BAE measurements of $\mbf{p}_{\rm out}$ with respect to $\mbf{q}_{\rm in}$ follows in a similar way. 
\end{proof}

\begin{example}
In \cite[Fig. 1(A)]{LOW+21}, it can be calculated that

\begin{equation}
C_{co}=\left[\begin{array}{c}
C_{co,q} \\
C_{co,p} 
\end{array}\right]=\left[\begin{array}{cc}
\sqrt{\kappa} & 0 \\
0 & \sqrt{\kappa}
\end{array}\right],  ~~  
B_{co}=\left[\begin{array}{cc}
B_{co,q} & B_{co,p} 
\end{array}\right]=-\left[\begin{array}{cc}
\sqrt{\kappa} & 0 \\
0 & \sqrt{\kappa}
\end{array}\right],
\end{equation}
which imply that
\begin{equation}
\left\{\begin{array}{c}
C_{co,q}B_{co,p}=0, \\
C_{co,p}B_{co,q}=0.
\end{array}\right.     
\end{equation}
Thus, both the BAE measurements of $\mbf{q}_{\rm out}$ with respect to $\mbf{p}_{\rm in}$ and $\mbf{p}_{\rm out}$ with respect to $\mbf{q}_{\rm in}$ are realized in this experiment. 
\end{example}

\bmrk
According to  \cite[Lemma 3.2]{ZPL20}, if $H$ is of the form of  \cite[Eq. (12)]{ZPL20} and $\Gamma $ is of the form of  \cite[Eq. (13)]{ZPL20}, then the linear system is of the Kalman canonical form. If $\Gamma_h \neq 0$, then the variables $p_h$ exist, which means there are QND variables.
\emrk

\subsection{Realization of QND variables under QND interaction}


QND variables correspond to the uncontrollable and observable subsystems. In this subsection, we investigate sufficient conditions for the realization of a QND observable under QND interaction, i.e., $[\mbf{L},\mbf{H}]=0$.

Since $\Lambda=\frac{1}{\sqrt{2}}\left[\begin{array}{cc}C_-+C_+ & \imath(C_--C_+)\end{array}\right]=\left[\begin{array}{cc}\Lambda_q & \Lambda_p\end{array}\right]$, and
\begin{equation}
\mathbb{B}=-\left[\begin{array}{cc}\mathrm{Re}(C_-^\dagger-C_+^\dagger) & -\mathrm{Im}(C_-^\dagger-C_+^\dagger) \\
\mathrm{Im}(C_-^\dagger+C_+^\dagger) & \mathrm{Re}(C_-^\dagger+C_+^\dagger)\end{array}\right]=
-\sqrt{2}\left[\begin{array}{cc}-\mathrm{Im}(\Lambda_p^\dagger) & -\mathrm{Re}(\Lambda_p^\dagger) \\
\mathrm{Im}(\Lambda_q^\dagger) & \mathrm{Re}(\Lambda_q^\dagger)\end{array}\right],
\end{equation}
we need $\Lambda_q=0$ or $\Lambda_p=0$ to realize QND variables.

\begin{itemize}

\item $\Lambda_q=0$, i.e., $C_-=-C_+$. In this case, $C_--C_+=2C_-=-\sqrt{2}\imath \Lambda_p$, and $\Lambda_p=\sqrt{2}\imath C_-$. We have
\begin{equation}
\mathbb{C}=2\left[\begin{array}{cc}0 & -\mathrm{Im}(C_-) \\
0 & \mathrm{Re}(C_-)\end{array}\right], ~~~~ \mathbb{B}=
2\left[\begin{array}{cc}-\mathrm{Re}(C_-^\dagger) & \mathrm{Im}(C_-^\dagger) \\
0 & 0\end{array}\right],
\end{equation}
and 
\begin{equation}
\mathbb{C}^\sharp\mathbb{C}=4\left[\begin{array}{cc}0 & -\mathrm{Re}(C_-^\dagger)\mathrm{Im}(C_-)-\mathrm{Im}(C_-^\dagger)\mathrm{Re}(C_-) \\
0 & 0\end{array}\right].
\end{equation}
If further $\Omega_-=-\Omega_+$, then $\Omega_--\Omega_+=2\Omega_-$, and 
\begin{equation}
\mathbb{A}=2\left[\begin{array}{cc}0 & \mathrm{Re}(\Omega_-) \\
0 & \mathrm{Im}(\Omega_-)\end{array}\right]-2\left[\begin{array}{cc}0 & -\mathrm{Re}(C_-^\dagger)\mathrm{Im}(C_-)-\mathrm{Im}(C_-^\dagger)\mathrm{Re}(C_-) \\
0 & 0\end{array}\right].
\end{equation}
Actually, $C_-=-C_+$ ($\mbf{p}$ coupling) and $\Omega_-=-\Omega_+$ imply that $[\mbf{L},\mbf{H}]=0$. In this case, we have
\begin{equation}\begin{aligned}
\dot{\mbf{q}}&=2\left[\mathrm{Re}(\Omega_-)+\mathrm{Re}(C_-^\dagger)\mathrm{Im}(C_-)+\mathrm{Im}(C_-^\dagger)\mathrm{Re}(C_-)\right]\mbf{p}-2\mathrm{Re}(C_-^\dagger)\mbf{q}_{\mathrm{in}}+2\mathrm{Im}(C_-^\dagger)\mbf{p}_{\mathrm{in}}, \\
\dot{\mbf{p}}&=2\mathrm{Im}(\Omega_-)\mbf{p}, \\
\mbf{q}_{\mathrm{out}}&=-2\mathrm{Im}(C_-)\mbf{p}+\mbf{q}_{\mathrm{in}}, \\
\mbf{p}_{\mathrm{out}}&=2\mathrm{Re}(C_-)\mbf{p}+\mbf{p}_{\mathrm{in}}.
\end{aligned}\end{equation}

Thus, $\mbf{p}$ is a QND variable if the subsystem $(\mathrm{Im}(\Omega_-),0,-\mathrm{Im}(C_-))$ or $(\mathrm{Im}(\Omega_-),0,\mathrm{Re}(C_-))$ is observable. Moreover, BAE measurement is realized from $\mbf{p}_{\mathrm{in}}$ to $\mbf{q}_{\mathrm{out}}$ as well as from $\mbf{q}_{\mathrm{in}}$ to $\mbf{p}_{\mathrm{out}}$.

\item $\Lambda_p=0$, i.e., $C_-=C_+$. In this case, $C_-+C_+=2C_-=\sqrt{2}\Lambda_q$, and $\Lambda_q=\sqrt{2}C_-$. We have
\begin{equation}
\mathbb{C}=2\left[\begin{array}{cc}\mathrm{Re}(C_-) & 0 \\
\mathrm{Im}(C_-) & 0\end{array}\right], ~~~~ \mathbb{B}=
-2\left[\begin{array}{cc}0 & 0 \\
\mathrm{Im}(C_-^\dagger) & \mathrm{Re}(C_-^\dagger)\end{array}\right],
\end{equation}
and 
\begin{equation}
\mathbb{C}^\sharp\mathbb{C}=4\left[\begin{array}{cc}0 & 0 \\
\mathrm{Im}(C_-^\dagger)\mathrm{Re}(C_-)+\mathrm{Re}(C_-^\dagger)\mathrm{Im}(C_-) & 0\end{array}\right].
\end{equation}

If further $\Omega_-=\Omega_+$, then $\Omega_-+\Omega_+=2\Omega_-$, and 
\begin{equation}
\mathbb{A}=2\left[\begin{array}{cc}\mathrm{Im}(\Omega_-) & 0 \\
-\mathrm{Re}(\Omega_-) & 0\end{array}\right]-2\left[\begin{array}{cc}0 & 0 \\
\mathrm{Im}(C_-^\dagger)\mathrm{Re}(C_-)+\mathrm{Re}(C_-^\dagger)\mathrm{Im}(C_-) & 0\end{array}\right].
\end{equation}
Actually, $C_-=C_+$ ($\mbf{q}$ coupling) and $\Omega_-=\Omega_+$ also imply that $[\mbf{L},\mbf{H}]=0$. In this case, we have
\begin{equation}\begin{aligned}
\dot{\mbf{q}}&=2\mathrm{Im}(\Omega_-)\mbf{q}, \\
\dot{\mbf{p}}&=-2\left[\mathrm{Re}(\Omega_-)+\mathrm{Im}(C_-^\dagger)\mathrm{Re}(C_-)+\mathrm{Re}(C_-^\dagger)\mathrm{Im}(C_-)\right]\mbf{q}-2\mathrm{Im}(C_-^\dagger)\mbf{q}_{\mathrm{in}}-2\mathrm{Re}(C_-^\dagger)\mbf{p}_{\mathrm{in}}, \\
\mbf{q}_{\mathrm{out}}&=2\mathrm{Re}(C_-)\mbf{q}+\mbf{q}_{\mathrm{in}}, \\
\mbf{p}_{\mathrm{out}}&=2\mathrm{Im}(C_-)\mbf{q}+\mbf{p}_{\mathrm{in}}.
\end{aligned}\end{equation}
Thus, $\mbf{q}$ is a QND variable if the subsystem $(\mathrm{Im}(\Omega_-),0,\mathrm{Im}(C_-))$ or $(\mathrm{Im}(\Omega_-),0,\mathrm{Re}(C_-))$ is observable. Moreover, BAE measurement is realized from $\mbf{p}_{\mathrm{in}}$ to $\mbf{q}_{\mathrm{out}}$ as well as from $\mbf{q}_{\mathrm{in}}$ to $\mbf{p}_{\mathrm{out}}$.

\end{itemize}


Finally, if $C_+=0$ and $C_-$ is real, then 
\beq
\mathbb{B}=-\left[\begin{array}{cc}
C_-^\dagger & 0 \\
0 & C_-^\dagger
\end{array}\right], \ \ \mathbb{C}=
\left[
\bey{@{}ll@{}}
C_- & \\
 & C_-
\eey
\right],
\eeq
and
\beq
\mathbb{A}=\left[
\begin{array}{cc}
 \mathrm{Im}(\Omega_-+\Omega_+) & \mathrm{Re}(\Omega_--\Omega_+) \\
 -\mathrm{Re}(\Omega_-+\Omega_+) & \mathrm{Im}(\Omega_--\Omega_+) \\
 \end{array}
  \right]-\frac1{2}\left[
  \begin{array}{cc}
C_-^\dagger C_- & 0 \\
0 & C_-^\dagger C_-
\end{array}\right].
\eeq
Furthermore, if $\Omega_-=\Omega_+$, then
\beq
\mathbb{A}=2\left[\begin{array}{cc}\mathrm{Im}(\Omega_-) & 0 \\
-\mathrm{Re}(\Omega_-) & 0\end{array}\right]-\frac1{2}\left[
  \begin{array}{cc}
C_-^\dagger C_- & 0 \\
0 & C_-^\dagger C_-
\end{array}\right].
\eeq
Thus, BAE measurement can be realized. However, there is no QND variables.

\section{Conclusion}\label{Conclu}

This paper establishes a comprehensive framework for realizing BAE measurements and QND variables in linear quantum systems. We demonstrate that bilateral BAE measurements of conjugate observables can be achieved when the system Hamiltonian is purely imaginary and the coupling operator is real or purely imaginary. Furthermore, specific coupling configurations (e.g., $C_-=C_+$ or $C_-=-C_+$) yield QND variables for position or momentum, respectively. The analysis extends to systems where direct BAE realization conditions are not met, showing that coherent feedback control can engineer an effective Hamiltonian to enable BAE measurements. Crucially, under QND interaction condition $[\mbf{L},\mbf{H}]=0$, the coupling operator itself becomes a QND observable, allowing for continuous, non-perturbative monitoring. These results, formulated in both annihilation-creation and quadrature operator representations and clarified through the Kalman canonical form, provide a systematic design methodology for high-precision quantum metrology, with direct applications in quantum sensing, gravitational wave detection, and quantum information processing. Future work may explore nonlinear extensions and non-Markovian environments to further advance quantum measurement and control capabilities.


\medspace
\textbf{Acknowledgment.} This work is partially financially supported by Quantum Science and Technology-National Science and Technology Major Project 2023ZD0300600, Guangdong Provincial Quantum Science Strategic Initiative No. GDZX2303007, Hong Kong Research Grant Council (RGC) under Grant No. 15213924, and  the CAS AMSS-PolyU Joint Laboratory of Applied Mathematics.

\bibliographystyle{plain}
 \bibliography{gzhang}

\end{document}